\renewcommand{\L}{\mathcal{L}}
\newtheorem{thm}{Theorem}
\newtheorem{cor}[thm]{Corollary}
\newtheorem{lem}[thm]{Lemma}
\newtheorem{prop}[thm]{Proposition}
\newtheorem{defi}[thm]{Definition}
\newtheorem{rema}[thm]{Remark}
\newtheorem{exm}[thm]{Example}
\renewcommand{\implies}{\Rightarrow}
\newcommand{\bis}{\mathrel{\mathchoice%
{\raisebox{.3ex}{$\,
  \underline{\makebox[.7em]{$\leftrightarrow$}}\,$}}%
{\raisebox{.3ex}{$\,
  \underline{\makebox[.7em]{$\leftrightarrow$}}\,$}}%
{\raisebox{.2ex}{$\,
  \underline{\makebox[.5em]{\scriptsize$\leftrightarrow$}}\,$}}%
{\raisebox{.2ex}{$\,
  \underline{\makebox[.5em]{\scriptsize$\leftrightarrow$}}\,$}}}}
\newcommand{\la}{\langle}
\newcommand{\ra}{\rangle}
\newcommand{\mc}[1]{\mathcal{#1}}
\newcommand{\ML}{\ensuremath{\textit{ML}}}
\newcommand{\muc}{\ensuremath{L_{\mu}}}
\newcommand{\MLinf}{\ensuremath{\ML_{\infty}}}
\newcommand{\LTL}{\ensuremath{\mathit{LTL}}}
\newcommand{\CTL}{\ensuremath{\mathit{CTL}}}
\newcommand{\CTLstar}{\ensuremath{\CTL^*}}
\newcommand{\alaex}[1][]{\preccurlyeq^{#1}_e}
\newcommand{\saladi}[1][]{\prec^{#1}_d}
\newcommand{\salaex}[1][]{\prec^{#1}_e}
\newcommand{\aladi}[1][]{\preccurlyeq^{#1}_d}
\newcommand{\eqdi}[1][]{\approx^{#1}_{d}}
\newcommand{\eqex}[1][]{\approx^{#1}_{e}}
\newcommand{\incdi}[1][]{\Join^{#1}_{d}}
\newcommand{\incex}[1][]{\Join^{#1}_{e}}
\newcommand{\sepf}[3]{#1\parallel_{#2}#3}
\newcommand{\sepL}[3]{#1\parallel_{#2}#3}
\newcommand{\M}{\ensuremath{\mc{M}}}
\renewcommand{\iff}{\Leftrightarrow}
\newcommand{\ue}{\mathfrak{ue}}
\newcommand{\lftA}[1]{{#1}_{A}^{\mc{C}}}
\newcommand{\lftE}[1]{{#1}_{E}^{\mc{C}}}
\newcommand{\Ci}{\mc{C}_1}
\newcommand{\Cii}{\mc{C}_2}
\newcommand{\Linf}{\ensuremath{\L^{\infty}_{\infty}}}
\newcommand{\aladikk}{\preccurlyeq^{\kappa}_{d}}
\newcommand{\aladicc}{\preccurlyeq^{\mc{C}}_{d}}
\newcommand{\equivC}[1]{\equiv^{\mc{C}}_{#1}}
\begin{document}
\title{On Expressive Power and Class Invariance}
\author{Yanjing Wang\inst{1}\and Francien Dechesne\inst{2} \thanks{The authors are supported by 
NWO project VEMPS 612.000.528.}
}
\institute{CWI, Amsterdam, The Netherlands,
{\tt y.wang@cwi.nl}\\
\and Technische Universiteit Eindhoven, The Netherlands,
{\tt f.dechesne@tue.nl}
}
\maketitle

\begin{abstract}
In computer science, various logical languages are defined to analyze properties of systems. One way to pinpoint the essential differences between those logics is to compare their expressivity in terms of distinguishing power and expressive power. In this paper, we study those two concepts by regarding the latter notion as the former lifted to classes of models. We show some general results on lifting an invariance relation on models to one on classes of models, such that when the former corresponds to the distinguishing power of a logic, the latter corresponds to its expressive power, given certain compactness requirements. In particular, we introduce the notion of \textit{class bisimulation} to capture the expressive power of modal logics. We demonstrate the application of our results by revisiting   modal definability with our new insights.
\end{abstract}

\section{Introduction}

Logical languages are formal languages that can be used to express properties of mathematical structures: {\em models}. The standard notion for comparing how much logical languages can say about a certain class of models, are distinguishing power (can a language tell the difference between two models?), expressive power (which classes of models can be defined by a formula of the language?). 

The well-known hierarchy is as follows: distinguishing power is a coarser criterion for comparing languages than expressive power, e.g.~if two languages are equally distinguishing, they are not necessarily equally expressive, but if they are equally expressive, they are equally distinguishing. We can relate the two notions by observing that expressive power can be seen in terms of distinguishing power lifted to classes of models: a class of models is definable in a logic iff the logic can distinguish that class and its complement.

An important concept closely related to the distinguishing power in this context is \textit{structural equivalence} on models, e.g. isomorphism and bisimulation. Classic results showed that certain logics can only distinguish models up to certain structural equivalences, thus giving \textit{upper bounds} of the distinguishing power of the logics. In that case, the  structural equivalence is an \textit{invariance} relation for such logic. Under some restrictions, a structural equivalence may capture the distinguishing power of the logic precisely, such that two models can be distinguished by the logic iff they are equivalent according to the structural equivalence.

One goal of the paper is to sharpen the understanding of the difference between expressive and distinguishing power. 
We observe that expressive power can be seen as distinguishing power generalized to classes of models. This gives rise to the question whether a corresponding notion of invariance exists. In search of such notion, we generalize structural equivalence on models to structural relations on \textit{classes of models}, and use that for a notion of \textit{class invariance} for a logic that gives information about its expressive power. 

In Section~\ref{DandE} we study the expressive power as generalized distinguishing power in detail. We give some general invariance results in Section~\ref{secGen} on lifting structural equivalence on models to a relation on classes of models. We show that the two natural notions of class invariance correspond precisely to the class-equivalence and class-indistinguishability relations respectively, when restricted to \textit{compact classes}. This result is closely related to the use of compactness in existing Lindstr{\"o}m-type characterization theorems. 
In Section~\ref{secbis} we study class-bisimulation in the light of existing results on bisimulation.

We characterize modal definability by our notion of class-bisimilarity.

\paragraph{Related work.}
Comparing expressive power of different logics has always been a central issue in the study of logics. In the context of computer science, the work on comparing LTL and CTL is a notable example (a.o.~\cite{EmersonH83,BrowneClarkeGr87,ExCTLLTL,NainV07}), besides fundamental characterization theorems that capture the expressive power of modal logic and modal mu-calculus~\cite{Benthem,MuEx}. More recent results in terms of Lindstr{\"o}m-type characterization theorems also helped us sharpen our understanding on various fragments of FOL which are useful in computer science~\cite{BenthemMLLind07,LindTCate,LindTfragment}. More comprehensive discussions on the expressivity of modal logics can be found in~\cite{ModelModalHML}.

\section{Preliminaries} \label{DandE}
In this section, we study the gap between distinguishing power and expressive power. 
When comparing two logics $L_1$ and $L_2$, we assume a given class of models $\mc{M}$ on which $L_1$ and $L_2$ are interpreted; we use `classes of models' usually for `subclasses of $\mc{M}$'. All logics in this paper are assumed to be 2-valued.

\subsection{Expressive Power as a Generalized Distinguishing Power}

As we mentioned in the introduction, the distinguishing power of a logic is its power to tell two models apart. We call a formula $\varphi$ in a language $L$ a \textit{separating formula} for models $M_1$ and $M_2$ ($\sepf{M_1}{\varphi}{M_2}$), if either $M_1\vDash\varphi$ and $M_2\not\vDash\varphi$, or vice versa. We write $\sepL{M_1}{L}{M_2}$ if there exists $\varphi\in L$ with $\sepf{M_1}{\varphi}{M_2}$.

We say $L_2$ is at least as \textit{distinguishing} as $L_1$ ($L_1\aladi[] L_2$)
iff
\[\forall M_1, M_2\in\mc{M}: \sepL{M_1}{L_1}{M_2} \textit{ implies }
\sepL{M_1}{L_2}{M_2}.\]

For expressive power, we say $L_2$ is at least as \textit{expressive} (in defining properties i.e. classes of models)
as logic $L_1$ on \M\ ($L_1{\alaex[]} L_2$)
iff
\[\forall\varphi_1\in L_1\exists\varphi_2\in L_2\forall M\in\mc{M}: M\vDash \varphi_1 \textit{ iff } M\vDash \varphi_2,\]

It is not hard to prove that $L_1\alaex[] L_2$ implies $L_1\aladi[] L_2$, but the converse may fail, as the following table, comparing several modal logics, shows.  

\begin{figure}
\begin{footnotesize}

\[
\begin{array}{|c||c|c|c|}
\hline
\textit{expr} &&  &  \\
\diagdown &L_1\incex[] L_2  & L_1\salaex[] L_2  & L_1\eqex[] L_2  \\
\textit{dist}&    & &  \\ \hline\hline
 &L_1: \textrm{\textit{PDL}}  &  &  \\
L_1\incdi[] L_2 & L_2:\ML^- & \bot & \bot \\
 & \mc{M}:\textit{all Kripke models} &  &  \\ \hline
  & L_1: \textit{LTL} & L_1:\ML &  \\
L_1\saladi[] L_2 &L_2: \textit{CTL} &L_2: \textit{\MLinf}& \bot \\
 &\mc{M}:\textit{all Kripke models} & \mc{M}:\textit{all Kripke models} &  \\ \hline
  & L_1:\textrm{\textit{PDL}} & L_1:\ML &  \\
L_1\eqdi[] L_2  & L_2:\ML^- & L_2:\MLinf & \textit{any\ }L_1=L_2 \\
 &\mc{M}:\textit{image-finite models}& \mc{M}:\textit{image-finite models} &\textit{on any }\mc{M} \\ \hline
\end{array}
\]

\end{footnotesize}
\caption{Some modal logics compared with respect to expressive and distinguishing power. The symbols $\saladi[],\eqdi[]$ and $\incdi[]$ stand for strictly less distinguishing, equally distinguishing and incomparable in distinguishing power respectively; $\salaex[],\eqex[]$ and $\incex[]$ are the respective counterparts for expressive power. In the table, $\ML$ is basic modal logic; $\ML^-$ is basic modal logic with a backward modality; $\MLinf$ is basic modal logic with infinite (arbitrarily large) conjunctions; $\textit{PDL}$ is propositional dynamic logic; \LTL\ is Linear time temporal logic and \CTL\ is computation tree logic. Image-finite Kripke models are those in which from each world only finitely many worlds are accessible. All results are standard or easy to see; for \LTL\ vs \CTL\, cf.~\cite{EmersonH83}. }
\end{figure}

We now have a closer look at the notions of expressive and distinguishing power. The key observation is to view expressive power as distinguishing power on classes of models, with standard distinguishing power the special case for the restriction to singleton classes.

Like in the case for distinguishing power, we call a formula $\varphi$ in a language a \textit{separating formula} for classes of models $\Ci,\Cii\subseteq\mc{M}$ ($\sepf{\Ci}{\varphi}{\Cii}$), if either $$\Ci\subseteq\{M\in\mathcal{M}\mid M\vDash \varphi\} \textrm{ and } \Cii\subseteq\{M\in\mathcal{M}\mid M\not\vDash \varphi\}\textrm{ or vice versa. }$$  We write $\sepL{\Ci}{L}{\Cii}$ if there exists $\varphi\in L$ with $\sepf{\Ci}{\varphi}{\Cii}$.  Note that $\sepf{\{M_1\}}{\varphi}{\{ M_2\}}\iff\sepf{M_1}{\varphi}{M_2}$ and that the existence of a separating formula for two classes of models implies that they are disjoint.

We can gradually fill up the gap between distinguishing power and expressive power, with a hierarchy of notions capturing the ability of languages to distinguish pairs of {\em classes of} models, where those classes have cardinality up to $\kappa$: $L_2$ is at least as {\em $\kappa$-distinguishing} as $L_1$($L_1\aladikk L_2$) iff $L_2$ can distinguish the same classes of size up to $\kappa$ as $L_1$. Formally, $L_1\aladikk L_2 \iff$\\
$$\forall \Ci, \Cii\subseteq\mc{M} \textrm{ {\em of cardinality less than or equal to} } \kappa:
\sepL{\Ci}{L_1}{\Cii} \implies \sepL{\Ci}{L_2}{\Cii}.$$ We say $L_2$ is at least as {\em class-distinguishing} as $L_1$($L_1\aladicc L_2$) if $L_2$ can distinguish the same classes up to arbitrary size as $L_1$.

Now it is straightforward to see that $\kappa$-distinguishability coincides with standard distinguishability for $\kappa=1$. Standard expressivity coincides with class-distinguishability or $|\M|$-distinguishability if $\M$ is not a proper class as Theorem~\ref{Expr=ClassDist} shows:

 \begin{thm}\label{Expr=ClassDist}
For logical languages $L_1,L_2$ that are interpretable on a class of models \M, the following are equivalent:
\begin{enumerate}
\item $L_2$ is at least as class-distinguishing as $L_1$ ($L_1\aladicc L_2$).
\item $L_2$ is at least as expressive as $L_1$ ($L_1\alaex L_2$).
\end{enumerate}
\end{thm}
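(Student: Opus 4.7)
The plan is to prove both implications directly from the definitions of $\aladicc$ and $\alaex$, using the observation that the class-separation condition, when specialised to a partition of $\mathcal{M}$, forces the subset inclusions in its definition to become set equalities.

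For $(2) \Rightarrow (1)$, I take any $\mathcal{C}_1, \mathcal{C}_2 \subseteq \mathcal{M}$ together with a separating formula $\varphi_1 \in L_1$; without loss of generality $\mathcal{C}_1 \subseteq \{M \vDash \varphi_1\}$ and $\mathcal{C}_2 \subseteq \{M \not\vDash \varphi_1\}$. Applying $L_1 \alaex L_2$ to pick $\varphi_2 \in L_2$ with the same set of models as $\varphi_1$, the two inclusions transfer verbatim and yield $\sepL{\mathcal{C}_1}{L_2}{\mathcal{C}_2}$. This direction is essentially routine book-keeping.

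For $(1) \Rightarrow (2)$, the key move is to feed class-distinguishability the partition cut out by a single formula. Given an arbitrary $\varphi_1 \in L_1$, I set $\mathcal{C}_1 := \{M \in \mathcal{M} \mid M \vDash \varphi_1\}$ and $\mathcal{C}_2 := \mathcal{M} \setminus \mathcal{C}_1$. Then $\varphi_1$ itself witnesses $\sepL{\mathcal{C}_1}{L_1}{\mathcal{C}_2}$, so by hypothesis there is some $\varphi_2 \in L_2$ with $\sepf{\mathcal{C}_1}{\varphi_2}{\mathcal{C}_2}$. Since $\mathcal{C}_1 \cup \mathcal{C}_2 = \mathcal{M}$, the two inclusions in the definition of $\sepf{\mathcal{C}_1}{\varphi_2}{\mathcal{C}_2}$ collapse to equalities, so $\{M \vDash \varphi_2\}$ is either $\mathcal{C}_1$ or $\mathcal{C}_2$. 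In the first case, $\varphi_2$ is precisely the witness required by $\alaex$ for $\varphi_1$.

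The one delicate point I expect to be the main obstacle is the remaining subcase, where $\varphi_2$ defines the complement of $\varphi_1$'s class rather than $\mathcal{C}_1$ itself. Because $\alaex$ demands a formula with the \emph{same} models as $\varphi_1$ (not the complementary class), bridging this step requires the ambient logic $L_2$ to be closed under negation — a property that holds implicitly for all logical systems appearing in the table above. Granted this closure, $\neg\varphi_2$ supplies the needed $L_2$-formula defining $\mathcal{C}_1$, completing the reverse implication; without some such closure assumption (or a symmetrised reading of $\alaex$), only the ``up-to-negation'' form of the statement survives. Everything else is a direct unpacking of the separation clause.
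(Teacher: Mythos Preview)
Your proof follows essentially the same route as the paper's: for $(2)\Rightarrow(1)$ you pass from a separating $\varphi_1$ to the equivalent $\varphi_2$, and for $(1)\Rightarrow(2)$ you apply class-distinguishability to the partition $\{M\vDash\varphi_1\}$ and its complement, then argue that the resulting separating $\varphi_2$ defines one of the two pieces. The paper's appendix does exactly this.

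The one point worth flagging is that you have been \emph{more} scrupulous than the paper about the ``vice versa'' subcase in $(1)\Rightarrow(2)$. The paper simply writes ``In any case $\mc{C}$ is definable in $L_2$'', silently passing from ``$\varphi_2$ defines $\overline{\mc{C}}$'' to ``$\mc{C}$ is $L_2$-definable''. As you correctly observe, this step uses closure of $L_2$ under negation; the formal definition of $\alaex$ in the paper demands a $\varphi_2$ with the \emph{same} model class as $\varphi_1$, not the complementary one. The paper does not state this closure hypothesis explicitly at this point (it appears as an explicit hypothesis only later, e.g.\ in the Collapsing theorem and in Theorem~\ref{thm-main}), so your caveat is well placed rather than an error on your part. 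In short: same argument, same small gap, but you noticed it and the paper did not.
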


The essence of the proof of Theorem~\ref{Expr=ClassDist} consists of the fact that
 a subclass $\mc{C}$ of $\mc{M}$ is definable by $L$ iff there exists a separating formula in $L$ for $\mc{C}$ and $\overline{\mc{C}}=\mc{M}\setminus\mc{C}$. 
(Cf.~Appendix~\ref{app-Expr=ClassDist}.)

Since $\aladi$ is $\aladikk$ for $\kappa=1$, we have:

\begin{cor}
$L_1\alaex L_2$ implies $L_1\aladi L_2$.
\end{cor}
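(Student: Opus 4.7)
The plan is to derive the corollary directly from Theorem~\ref{Expr=ClassDist} together with the observation recorded just before that theorem, namely that $\aladi$ is the $\kappa=1$ instance of the class-distinguishing hierarchy $\aladikk$. First I would apply Theorem~\ref{Expr=ClassDist} to the hypothesis $L_1 \alaex L_2$ to obtain $L_1 \aladicc L_2$. By definition this means that for \emph{every} pair of subclasses $\mc{C}_1, \mc{C}_2 \subseteq \mc{M}$ one has $\sepL{\mc{C}_1}{L_1}{\mc{C}_2} \implies \sepL{\mc{C}_1}{L_2}{\mc{C}_2}$; in particular this holds for classes of size one.

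The second step is just to specialize to singletons. Given arbitrary $M_1, M_2 \in \mc{M}$ with $\sepL{M_1}{L_1}{M_2}$, the paper already notes the equivalence $\sepf{\{M_1\}}{\varphi}{\{M_2\}} \iff \sepf{M_1}{\varphi}{M_2}$, which lets me rewrite the assumption as $\sepL{\{M_1\}}{L_1}{\{M_2\}}$. Applying $L_1 \aladicc L_2$ to the pair $\{M_1\}, \{M_2\}$ produces some $\varphi \in L_2$ with $\sepf{\{M_1\}}{\varphi}{\{M_2\}}$, and the same equivalence converts this back to a separating formula for $M_1$ and $M_2$ in $L_2$, giving $\sepL{M_1}{L_2}{M_2}$ and hence $L_1 \aladi L_2$.

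There is essentially no obstacle here: once Theorem~\ref{Expr=ClassDist} is granted, the corollary is pure bookkeeping about singleton classes. Conceptually, this is the point of having introduced the hierarchy $\aladikk$ in the first place — it makes the passage from distinguishing power to expressive power a quantitative matter indexed by $\kappa$, with $\kappa=1$ at one end and class-distinguishability (equivalently, expressive power) at the other, and the corollary simply records the trivial implication at the small-$\kappa$ end.
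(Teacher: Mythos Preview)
Your proposal is correct and follows exactly the paper's own route: invoke Theorem~\ref{Expr=ClassDist} to pass from $\alaex$ to $\aladicc$, then specialize to the $\kappa=1$ case using the observation that $\aladi$ is $\aladikk$ for $\kappa=1$. The paper states this in a single line, while you have merely unpacked the singleton-class bookkeeping in more detail.
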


From the above results we know that expressive power can be approximated by distinguishing power for increasing sizes of classes of models.
However, the hierarchy may collapse due to the presence in the logical languages of certain connectives (which correspond to set-theoretic operations), as the following result shows:

\begin{thm}[Collapsing theorem]
If $L_2$ contains negation and $\kappa$-ary disjunction and conjunction, then  $L_1\aladi L_2$ iff $L_1{\aladikk}L_2$.
\end{thm}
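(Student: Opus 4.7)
The plan is to handle the two directions separately. The implication $L_1 \aladikk L_2 \Rightarrow L_1 \aladi L_2$ is immediate: the paper has already noted that $\aladi$ coincides with $\aladikk$ at $\kappa = 1$, and singletons fall within the ``cardinality at most $\kappa$'' restriction, so this direction is pure bookkeeping. The content lies in the converse.

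For the converse, I would assume $L_1 \aladi L_2$ and pick arbitrary $\Ci, \Cii \subseteq \M$ of cardinality at most $\kappa$ that admit a separating formula $\varphi_1 \in L_1$. Swapping $\Ci$ and $\Cii$ if necessary (separation is symmetric in its two arguments), I may arrange $\Ci \subseteq \{M \mid M \models \varphi_1\}$ and $\Cii \subseteq \{M \mid M \not\models \varphi_1\}$. Then for each pair $(M_1, M_2) \in \Ci \times \Cii$, $\varphi_1$ itself is a separating formula for $M_1$ and $M_2$ in $L_1$, so by the hypothesis $L_1 \aladi L_2$ there is some $\psi_{M_1, M_2} \in L_2$ separating them; using closure of $L_2$ under negation, I can orient it so that $M_1 \models \psi_{M_1, M_2}$ and $M_2 \not\models \psi_{M_1, M_2}$.

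The gluing step then has two layers. First, for each fixed $M_1 \in \Ci$ I form $\chi_{M_1} := \bigwedge_{M_2 \in \Cii} \psi_{M_1, M_2}$; since $|\Cii| \leq \kappa$ and $L_2$ has $\kappa$-ary conjunction, $\chi_{M_1} \in L_2$, and by construction $M_1 \models \chi_{M_1}$ while no $M_2 \in \Cii$ does. Second, I form $\chi := \bigvee_{M_1 \in \Ci} \chi_{M_1}$, which lies in $L_2$ because $|\Ci| \leq \kappa$. Every $M_1 \in \Ci$ satisfies $\chi$ via its own disjunct, while no $M_2 \in \Cii$ satisfies any disjunct, so $\chi$ is a separating formula for $\Ci$ and $\Cii$ in $L_2$, witnessing $L_1 \aladikk L_2$.

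The only delicate point is the bookkeeping: although the doubly-indexed family $\{\psi_{M_1,M_2}\}$ has up to $\kappa \cdot \kappa$ members, each individual conjunction and each individual disjunction used stays of width at most $\kappa$, which is exactly what the closure hypothesis on $L_2$ affords; and negation must be applied before gluing so that all the pairwise witnesses point the same way. Beyond these two routine verifications the proof is a standard ``separate pairs, then meet-over-joins'' construction and presents no genuine obstacle.
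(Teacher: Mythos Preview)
Your proposal is correct and follows essentially the same argument as the paper: derive pairwise $L_2$-separators from the assumption $L_1\aladi L_2$, orient them uniformly using negation, and then glue via the meet-then-join formula $\bigvee_{M_1\in\Ci}\bigwedge_{M_2\in\Cii}\psi_{M_1,M_2}$. The only cosmetic difference is that you also orient the initial $L_1$-separator $\varphi_1$, which the paper does not bother with (it only needs that each pair is $L_1$-separated), but this is harmless extra detail.
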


\begin{proof}
Right to left is obvious, we prove from left to right.
Assume $L_1\aladi L_2$, and suppose $\Ci,\Cii\subseteq{\mc{M}}$ are of cardinality $\leq\kappa$, and $\sepL{\Ci}{L_1}{\Cii}$. Then, in particular, for each $M_1\in \Ci$ and $M_2\in\Cii$:
$\sepL{M_1}{L_1}{M_2}$. But because $L_1\aladi L_2$, then also $\sepL{M_1}{L_2}{M_2}$. For each pair $M_1\in \Ci$ and $M_2\in\Cii$, pick a formula $\psi({M_1,M_2})\in L_2$
such that $\sepL{M_1}{\psi({M_1,M_2})}{M_2}$. Without loss of generality, because $L_2$ contains negation, we can take the $\psi({M_1,M_2})$ such that they are true on the $M_1 (\in \Ci)$, and false on the $M_2 (\in\Cii)$.
But then for each $M_1\in \Ci$: $\bigwedge_{M_2\in\Cii}\psi(M_1,M_2)\in L_2$ separates $\{M_1\}$ and ${\Cii}$,
and hence: $\bigvee_{M_1\in\Ci}\bigwedge_{M_2\in\Cii}\psi(M_1,M_2)\in L_2$ separates $\Ci$ and ${\Cii}$.
So: $\sepL{\Ci}{L_2}{\Cii}$.

Hence: $L_1\aladi L_2$ implies $L_1{\aladikk}L_2$.
\qed\end{proof}
From the above theorem, it is straightforward to see that distinguishing power and expressive power coincide if we restrict to finite model classes and a logic with the usual connectives:
\begin{cor}\label{1=n}
If $L_2$ contains $\lor,\land$ and $\neg$, then for each $\kappa<\aleph_0$, $L_1\aladi L_2\iff L_1{\aladikk} L_2$ .
\end{cor}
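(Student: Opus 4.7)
The plan is to derive the corollary as a direct consequence of the Collapsing Theorem just established, using the fact that finiteness of $\kappa$ allows us to build $\kappa$-ary connectives out of the binary ones assumed to be in $L_2$.

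First, the easy direction $L_1 \aladikk L_2 \implies L_1 \aladi L_2$ requires no hypothesis on $L_2$ and holds for every $\kappa \geq 1$: since singletons are classes of cardinality at most $\kappa$, and $\sepf{\{M_1\}}{\varphi}{\{M_2\}} \iff \sepf{M_1}{\varphi}{M_2}$ as noted just before Theorem~\ref{Expr=ClassDist}, distinguishability on singletons is a special case of $\kappa$-distinguishability. (For the degenerate case $\kappa = 0$ the statement is vacuous on one side.)

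For the nontrivial direction $L_1 \aladi L_2 \implies L_1 \aladikk L_2$, I would invoke the Collapsing Theorem. Its hypothesis demands that $L_2$ contain negation and $\kappa$-ary disjunction and conjunction. The given corollary only assumes the binary connectives $\land, \lor$ and negation, but since $\kappa < \aleph_0$, any $\kappa$-ary conjunction $\bigwedge_{i=1}^{\kappa} \psi_i$ can be written as an iterated binary conjunction $((\psi_1 \land \psi_2) \land \dots ) \land \psi_\kappa$, and similarly for disjunction. Hence $L_2$ closes under $\kappa$-ary $\land$ and $\lor$ up to logical equivalence, and the Collapsing Theorem applies verbatim.

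I do not foresee any real obstacle: the only subtlety is the move from binary to $\kappa$-ary connectives, which is routine precisely because we are in the finitary regime $\kappa < \aleph_0$. This is in fact the reason why the corollary is phrased only for finite $\kappa$; the infinite case would genuinely require the stronger hypothesis of the Collapsing Theorem itself.
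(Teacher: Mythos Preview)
Your proposal is correct and matches the paper's intended argument: the paper presents the corollary immediately after the Collapsing Theorem with only the remark that it is ``straightforward to see,'' and your derivation---observing that finite $\kappa$ lets binary $\land,\lor$ simulate the required $\kappa$-ary connectives so the Collapsing Theorem applies directly---is exactly that straightforward step spelled out.
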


\subsection{Induced Equivalence Relation}
In the previous subsection, we viewed distinguishing power and expressive power in terms of the abilities of the languages to separate model classes. Now we take a dual perspective, namely the equivalence induced by the limit of distinguishing power over classes, since this will lead us to the notion of invariance.

Let $\equiv_L$ be the \textit{induced equivalence relation} of language $L$ such that for all $M_1,M_2\in\M$:
$M_1\equiv_L M_2$ iff $M_1$ and $M_2$ satisfy exactly the same set of $L$-formulas. Note that $\sepL{M_1}{L}{M_2}$ iff $M_1\not\equiv_L M_2$, so we can easily prove:
\begin{thm}\label{IndEquiv=Indist}
 $L_1\aladi L_2$ iff~~$\equiv_{L_2}\subseteq \equiv_{L_1} $ .
\end{thm}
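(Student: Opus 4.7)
The plan is to prove the biconditional by a direct chain of equivalences, exploiting the observation already noted just before the statement: $\sepL{M_1}{L}{M_2}$ iff $M_1 \not\equiv_L M_2$. In other words, non-distinguishability by a logic is exactly its induced equivalence, so the claim should reduce to contrapositive bookkeeping. I do not expect a serious obstacle here; the only care needed is in matching the direction of the set inclusion $\equiv_{L_2} \subseteq \equiv_{L_1}$ with the direction of the distinguishing-power preorder $L_1 \aladi L_2$, which (correctly) reverses because ``more distinguishing'' means ``fewer pairs equivalent''.

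For the forward direction, I would assume $L_1 \aladi L_2$ and take an arbitrary pair $(M_1, M_2) \in \,\equiv_{L_2}$. By the noted equivalence, $\neg \sepL{M_1}{L_2}{M_2}$. Taking the contrapositive of the defining implication of $L_1 \aladi L_2$, namely $\neg \sepL{M_1}{L_2}{M_2} \Rightarrow \neg \sepL{M_1}{L_1}{M_2}$, we get $M_1 \equiv_{L_1} M_2$, establishing $\equiv_{L_2} \subseteq \equiv_{L_1}$.

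For the backward direction, I would assume $\equiv_{L_2} \subseteq \equiv_{L_1}$ and show the defining condition of $L_1 \aladi L_2$ directly. Let $M_1, M_2 \in \mc{M}$ be arbitrary with $\sepL{M_1}{L_1}{M_2}$. Again applying the noted equivalence, $M_1 \not\equiv_{L_1} M_2$, whence by the contrapositive of the assumed inclusion $M_1 \not\equiv_{L_2} M_2$, and therefore $\sepL{M_1}{L_2}{M_2}$, as required.

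Thus the whole argument is essentially a two-line symmetric contrapositive, with the noted equivalence between separability and non-equivalence doing all the real work. The only ``hard'' (i.e.\ conceptual) part is keeping the direction of inclusion straight, which I would make explicit in the write-up by stating once at the start that $\equiv_L$ is exactly the complement (as a relation on $\mc{M} \times \mc{M}$) of the $L$-separability relation.
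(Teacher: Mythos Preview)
Your argument is correct and is exactly the intended one: the paper itself only remarks that $\sepL{M_1}{L}{M_2}$ iff $M_1\not\equiv_L M_2$ and says the theorem follows easily, which is precisely the contrapositive bookkeeping you spell out. There is nothing to add.
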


The theorem states: logical languages that induce a finer equivalence relation, have stronger distinguishing power and vice versa. We thus can compare distinguishing power of two logics by comparing their induced equivalence relations. As we mentioned in the introduction, given a logic, we would like to find a {\em structural} characterizations of its induced equivalence, as a structural measure of its distinguishing power. We call an equivalence relation $\sim$ on a class of models $\mc{M}$ an \textit{invariance} for $L$ on $\mc{M}$ if for all $M_1,M_2\in\mc{M}$:
 $$M_1\sim M_2 \Rightarrow M_1\equiv_L M_2.$$

Various structural invariance relations for logics are known, like bisimulation for modal logic and isomorphism for first-order logic. Sometimes we have a precise structural characterization of the induced equivalence, e.g. two pointed Kripke models are bisimilar iff they satisfy the same set of infinitary modal formulas \cite{mlbook}. So bisimulation coincides with the induced equivalence relation for $\MLinf$. However, for many temporal logics including $\mu$-calculus $\muc$, bisimulation is an invariance relation but not the induced equivalence relation: it is possible to find two non-bisimilar models that can not be told apart by \muc~\cite{MucalcHML}. It follows that \MLinf\ is strictly more distinguishing than $\muc$. On the other hand, $\muc$ and $\MLinf$ are not comparable in terms of expressive power since $\muc$ can define well-foundedness while $\MLinf$ can not, cf.~\cite{MucalcHML}.

We now investigate whether we can define a relation on models that similarly corresponds to expressive power. A promising candidate is class equivalence:  
\begin{defi} Define the equivalence $\equivC{L}$ induced by $L$ on classes of models by
$\Ci\equivC{L} \Cii\iff(\forall{\varphi\in L}, \Ci\vDash\varphi \text{ iff\ } \Cii\vDash\varphi)$,
where $\Ci,\Cii\subseteq\mc{M}$ and $\mc{C}\vDash\varphi$ iff $\forall M\in \mc{C},M\vDash\varphi$.
\end{defi}

However, this natural notion of equivalence notion does not match class-indis\-tinguishability, as the following Proposition shows:
\begin{prop}
For all classes of models $\Ci,\Cii\subseteq\mc{M}$:
$\sepL{\Ci}{L}{\Cii}\Rightarrow \Ci\not\equivC{L}\Cii $,
but the converse does not hold in general.
\end{prop}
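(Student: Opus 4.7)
The plan for the forward direction is a direct unpacking of both definitions. Suppose $\varphi \in L$ separates $\Ci$ and $\Cii$, and (after possibly swapping the two classes, which the definition of separating formula allows) that every $M \in \Ci$ satisfies $\varphi$ while every $M \in \Cii$ satisfies $\neg\varphi$. Under the mild assumption that $\Cii \neq \emptyset$ (the pathological case $\Ci = \Cii = \emptyset$ aside), we get $\Ci \vDash \varphi$ and $\Cii \not\vDash \varphi$ directly from the definition of $\mc{C} \vDash \varphi$, so $\varphi$ itself witnesses $\Ci \not\equivC{L} \Cii$. If instead $\Cii = \emptyset$ and $\Ci \neq \emptyset$, the witness can be taken to be any $L$-formula falsified by some model of $\Ci$, exploiting that the empty class vacuously validates every formula.

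For the converse, I would exhibit a pair of classes that disagree on some formula but cannot be separated. The conceptual point is that $\Cii \not\vDash \varphi$ only asks that one member of $\Cii$ falsify $\varphi$, whereas a separating formula must be uniformly false on all of $\Cii$. Concretely, pick two models $M_1, M_2 \in \mc{M}$ with $M_1 \not\equiv_L M_2$ (such a pair exists whenever $L$ has any distinguishing power on $\mc{M}$, the only case of interest), choose $\varphi \in L$ with $M_1 \vDash \varphi$ and $M_2 \not\vDash \varphi$, and set $\Ci = \{M_1\}$ and $\Cii = \{M_1, M_2\}$. Then $\Ci \vDash \varphi$ but $\Cii \not\vDash \varphi$, so $\Ci \not\equivC{L} \Cii$. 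Yet no formula in $L$ separates $\Ci$ and $\Cii$, because $\Ci \cap \Cii \neq \emptyset$ and the note following the definition of separating formula records that separated classes must be disjoint.

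If disjointness of the witness classes is desired for aesthetic reasons, the same example works after replacing $M_1 \in \Cii$ by any $M_1' \neq M_1$ with $M_1' \equiv_L M_1$, provided $L$ does not induce the identity on $\mc{M}$. A hypothetical separator $\psi$ would then be forced to agree on $M_1$ and $M_1'$ while putting them on opposite sides of the separation, a contradiction. There is no real obstacle in either direction; the interest of the proposition lies in isolating exactly where the natural $\equivC{L}$ fails to match class-indistinguishability, setting up the refined notion developed in the following section.
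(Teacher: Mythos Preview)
Your proposal is correct and follows essentially the same approach as the paper. The forward direction is the trivial unpacking in both; you are actually more careful than the paper in flagging the empty-class edge cases (the paper simply says ``$\Rightarrow$ is trivial''). For the failure of the converse, both arguments exploit the same observation: separated classes must be disjoint, so any pair of overlapping classes that disagree on some formula works. The paper's witness is $\Ci=\{M\mid M\vDash\psi\}$ and $\Cii=\mc{M}$ for a contingent $\psi$; yours is the minimal instance $\Ci=\{M_1\}$, $\Cii=\{M_1,M_2\}$ with $M_1\not\equiv_L M_2$. These are the same idea at different scales, and your third paragraph on disjoint witnesses is an optional refinement not present in the paper.
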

\begin{proof}
$\Rightarrow$ is trivial.

To see that the converse does not hold, we may take some formula $\psi\in L$ that is contingent on $\mc{M}$ (i.e. $\psi$ can be satisfiable but not valid on $\M$). Now let $\Ci$ be the class of models defined by $\psi$, and let $\Cii$ be $\mc{M}$. Then $\Ci\subset\Cii$, so they are not distinguishable (distinguishability implies disjointness). But obviously $\Ci\not\equivC{L}\Cii$, because $\Ci\vDash\psi$ but {\em not} $\Cii\vDash\psi$. So, the right-to-left-implication does not hold.
\qed\end{proof}

The relation that corresponds to class-indistinguishability is the following:
\begin{defi} For classes of models $\Ci,\Cii\subseteq\mc{M}$, we define the symmetric relation $\asymp_L$ of $L$ by:
\[\Ci\asymp_L\Cii\iff\forall{\varphi\in L}:
(\Ci\vDash\varphi \Rightarrow \Cii\nvDash\neg \varphi)\land (\Cii\vDash\varphi \Rightarrow \Ci\nvDash\neg \varphi)\]

\end{defi}

It is easy to see that $\Ci\not\asymp_L\Cii\iff \sepL{\Ci}{L}{\Cii}$ (but note that $\asymp_L$ is not an equivalence relation, as it does not satisfy transitivity in general).

Thus we have:
\begin{thm}
$L_1\alaex L_2 \iff \asymp_{L_2}\subseteq\asymp_{L_1}.$ 
\end{thm}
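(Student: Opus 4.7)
The plan is to chain together two equivalences that the excerpt has already established, so that no new invariance work is needed. Specifically, I will rewrite both $L_1\alaex L_2$ and $\asymp_{L_2}\subseteq\asymp_{L_1}$ in terms of class-separation, and verify that they match up via a contrapositive.

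First, I would apply Theorem~\ref{Expr=ClassDist} to convert the expressive-power inequality into the class-distinguishability condition: $L_1\alaex L_2$ is equivalent to $L_1\aladicc L_2$, which by definition unfolds to
\[
\forall \Ci,\Cii\subseteq\mc{M}:\ \sepL{\Ci}{L_1}{\Cii} \Rightarrow \sepL{\Ci}{L_2}{\Cii}.
\]
This reduces the expressive side of the claim to a statement purely about separating formulas.

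Next, I would invoke the observation immediately preceding the theorem, namely $\Ci\not\asymp_L\Cii \iff \sepL{\Ci}{L}{\Cii}$, applied to both $L_1$ and $L_2$. Rewriting the display above using this equivalence, I get
\[
\forall \Ci,\Cii\subseteq\mc{M}:\ \Ci\not\asymp_{L_1}\Cii \Rightarrow \Ci\not\asymp_{L_2}\Cii,
\]
and taking the contrapositive yields $\asymp_{L_2}\subseteq\asymp_{L_1}$, which is exactly the right-hand side of the theorem. Each step is an ``iff'', so the entire chain is reversible, giving both directions simultaneously.

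Since both ingredients---Theorem~\ref{Expr=ClassDist} and the $\not\asymp$-vs-separation equivalence---are already proved or noted in the excerpt, there is no real obstacle to the argument; the only care needed is to keep the quantifier pattern intact when passing through the contrapositive, and to notice that while $\asymp_L$ itself is not transitive, that is irrelevant here because inclusion of relations is defined pointwise on pairs. The proof is therefore essentially a one-line bookkeeping argument, which I would present as a short displayed chain of equivalences followed by \qed.
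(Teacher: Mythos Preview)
Your proposal is correct and matches the paper's intended argument: the theorem is stated immediately after the observation $\Ci\not\asymp_L\Cii\iff \sepL{\Ci}{L}{\Cii}$ with the phrase ``Thus we have'', and no separate proof is given, precisely because it follows by combining that observation with Theorem~\ref{Expr=ClassDist} via the contrapositive you describe.
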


\section{Class Invariance}\label{secGen}
In this section, we propose two ways to lift the structural equivalence to class similarity relations which can be used as candidates for the structural characterization of $\equivC{L}$ and $\asymp_L$. We show in Theorem~\ref{DimpE} that they are indeed class invariance relations if they are based on model invariance relations for the same language $L$.
\begin{defi}\label{classlift}
Given an equivalence relation $\sim$ on models, we can lift $\sim$ to a relation on classes of models in the following two ways:
\begin{itemize}
\item $\Ci\lftA{\sim}\Cii$ iff:
\begin{itemize}
\item for each $M_1\in \Ci$ there is $M_2\in \Cii$ such that $M_1\sim M_2$.
\item for each $M_2\in \Cii$ there is $M_1\in \Ci$ such that $M_1\sim M_2$.
\end{itemize}
\item $\Ci\lftE{\sim}\Cii$ iff there are two models $M_1\in \Ci$ and $M_2\in \Cii$ such that $M_1\sim M_2$.
\end{itemize}
\end{defi}
It is easy to see that $\lftE{\sim}$ and $\lftA{\sim}$ are indeed generalizations of $\sim$: $M\sim N\iff \{M\}\lftE{\sim}\{N\}\iff \{M\}\lftA{\sim}\{N\}.$
\begin{rema}
The above definitions give rise to some intuitive class comparison games that naturally lift the model comparison game for $\sim$. For example, considering $\lftE{\sim}$, define $G_E^{\sim}(\Ci,\Cii)$ for classes $\Ci,\Cii$ of models as follows: Verifier chooses two models $M,N$ from $\Ci$ and $\Cii$ respectively. Then he plays the model comparison game $\mathit{MC}^{\sim}(M,N)$ with Spoiler, if $\mathit{MC}^{\sim}(M,N)$ is defined. Verifier wins iff he wins in $\mathit{MC}^{\sim}(M,N)$. 
Depending on the type of $\mathit{MC}^{\sim}(M,N)$ (e.g. n-round, unbounded, infinitary) and the the results on $\mathit{MC}^{\sim}(M,N)$ with respect to certain language $L$, we may characterize $\asymp_{L'}$ with $L'$ being a fragment of $L$, by the Verifier's winning strategies of the game $G_E^{\sim}(\Ci,\Cii)$. 
(Space limitations prevent us from elaborating further on the game perspective in this paper.)
\end{rema}

We can show that if $\sim$ is an invariance for $L$ then the lifted relations $\lftA{\sim}$ and $\lftE{\sim}$ are invariance relations for the class equivalence or class indistinguishability respectively.
\begin{thm}\label{DimpE}
$\forall M,N: M\sim N\implies M\equiv_L N$ implies:
\begin{enumerate}
\item $\forall \Ci,\Cii:\Ci\lftA{\sim} \Cii \implies \Ci\equivC{L}\Cii$
\item  $\forall \Ci,\Cii:\Ci\lftE{\sim} \Cii \implies \Ci\asymp_{{L}}\Cii$
\end{enumerate}
\end{thm}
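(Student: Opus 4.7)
The plan is to unfold both definitions and apply the hypothesis $M\sim N\implies M\equiv_L N$ pointwise on the witnesses supplied by the lifted relations. Both parts are direct verifications; there is no substantial obstacle beyond reading the definition of $\asymp_L$ carefully.

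For (1), I fix an arbitrary $\varphi\in L$ and prove the biconditional $\Ci\vDash\varphi$ iff $\Cii\vDash\varphi$ by a symmetric argument. Assuming $\Ci\vDash\varphi$, I take an arbitrary $M_2\in\Cii$ and invoke the backward clause of $\lftA{\sim}$ to obtain some $M_1\in\Ci$ with $M_1\sim M_2$. Since $M_1\in\Ci$ forces $M_1\vDash\varphi$, and the invariance hypothesis gives $M_1\equiv_L M_2$, I conclude $M_2\vDash\varphi$; as $M_2$ was arbitrary in $\Cii$, this yields $\Cii\vDash\varphi$. The converse uses the forward clause of $\lftA{\sim}$ in the symmetric way. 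Since this holds for every $\varphi\in L$, I obtain $\Ci\equivC{L}\Cii$.

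For (2), I fix $\varphi\in L$ and use the witnesses $M_1\in\Ci$, $M_2\in\Cii$ with $M_1\sim M_2$ provided by $\lftE{\sim}$. If $\Ci\vDash\varphi$, then $M_1\vDash\varphi$, so by invariance $M_2\vDash\varphi$, i.e.\ $M_2\nvDash\neg\varphi$. Since $\Cii\vDash\neg\varphi$ would require \emph{every} element of $\Cii$ to satisfy $\neg\varphi$, the existence of the counterwitness $M_2\in\Cii$ refutes it, giving $\Cii\nvDash\neg\varphi$. The implication $\Cii\vDash\varphi\Rightarrow\Ci\nvDash\neg\varphi$ follows identically, using $M_1$ as the counterwitness. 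Both conjuncts in the definition of $\Ci\asymp_L\Cii$ are thus established.

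The only conceptual point worth flagging is the asymmetric roles played by the two lifts: the ``universal'' lift $\lftA{\sim}$ matches class \emph{equivalence} $\equivC{L}$ because $\Ci\vDash\varphi$ quantifies universally over $\Ci$, so transferring $\varphi$ to $\Cii$ requires a $\sim$-partner for every element of $\Cii$; whereas the ``existential'' lift $\lftE{\sim}$ matches $\asymp_L$ because $\Cii\nvDash\neg\varphi$ is an existential statement (some element fails $\neg\varphi$), so a single $\sim$-related pair of witnesses suffices. This observation is what guides the argument and makes both implications essentially one-line applications of the invariance hypothesis.
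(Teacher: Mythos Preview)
Your proof is correct and follows essentially the same approach as the paper's own proof. The paper dismisses (1) as ``obvious'' and for (2) picks the witnessing pair from $\lftE{\sim}$, transfers truth of $\varphi$ along the invariance hypothesis, and concludes $\Cii\not\subseteq\{M\mid M\nvDash\varphi\}$; you have merely spelled out both parts with a bit more detail (including the symmetric conjunct for (2)), but the underlying argument is identical.
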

\begin{proof}
(1) is obvious, we only prove (2). If $\Ci\lftE{\sim}\Cii$, then there are $M\in \Ci, N\in \Cii$ such that $M\sim N$. For any formula $\phi$ such that $\Ci\vDash\phi$, we know $M\vDash\phi.$ Since $M\sim N$ implies $M\equiv_L N$, we have $N\vDash\phi$. Therefore $\Cii\not\subseteq \{M\mid M\nvDash\phi\}$.
\qed\end{proof}

As we have seen in the examples, $\forall M,N: M\sim N\iff M\equiv_L N$ does not imply $\forall \Ci,\Cii:\Ci\lftE{\sim} \Cii \iff \Ci\equivC{L}\Cii$ in general, otherwise there would be no difference between distinguishing power and expressive power. Now let ${\Linf}$ be a logical language with arbitrarily large (up to class-size) conjunctions and disjunctions. We then have the following straightforward results:
\begin{thm}\label{Linf}
$\forall M,N: M\sim N\iff M\equiv_{\Linf} N$ implies:
\begin{enumerate}
\item $\forall \Ci,\Cii:\Ci\lftA{\sim} \Cii \iff \Ci\equivC{\Linf}\Cii$
\item  $\forall \Ci,\Cii:\Ci\lftE{\sim} \Cii \iff \Ci\asymp_{{\Linf}}\Cii$
\end{enumerate}
\end{thm}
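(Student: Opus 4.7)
The forward directions of both (1) and (2) are immediate from Theorem~\ref{DimpE}, since the hypothesis $M\sim N\iff M\equiv_{\Linf}N$ certainly yields $M\sim N\Rightarrow M\equiv_{\Linf}N$. So the real content is the two converse directions.

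The main device I would introduce is a \emph{characteristic formula} for each model. For every $M\in\M$, let
\[\chi_M\;:=\;\bigwedge\{\varphi\in\Linf \mid M\vDash\varphi\},\]
which exists in $\Linf$ because the language admits class-size conjunctions. Clearly $M\vDash\chi_M$. Using that $\Linf$ is closed under negation and the assumption $\sim{=}\equiv_{\Linf}$, I would first show the key lemma
\[N\vDash\chi_M\iff M\sim N:\]
if $N\vDash\chi_M$ then $N$ satisfies every formula true at $M$, and pairing this with negation gives $N\equiv_{\Linf}M$, hence $M\sim N$; the converse is direct from invariance of $\Linf$ under $\sim$.

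For the converse of (1), suppose $\Ci\not\lftA{\sim}\Cii$. By symmetry of the two clauses of $\lftA{\sim}$, assume WLOG that some $M_1\in\Ci$ has $M_1\not\sim M_2$ for all $M_2\in\Cii$. Form
\[\psi\;:=\;\bigvee_{M_2\in\Cii}\chi_{M_2}\in\Linf.\]
By the key lemma, $\Cii\vDash\psi$ (each $M_2$ satisfies $\chi_{M_2}$), while $M_1\not\vDash\psi$ since $M_1$ is $\sim$-inequivalent to every $M_2\in\Cii$. Hence $\Ci\not\vDash\psi$, so $\Ci\not\equivC{\Linf}\Cii$, as required.

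For the converse of (2), suppose $\Ci\not\lftE{\sim}\Cii$, i.e.\ no pair $(M_1,M_2)\in\Ci\times\Cii$ is $\sim$-related. Form
\[\psi\;:=\;\bigvee_{M_1\in\Ci}\chi_{M_1}\in\Linf.\]
Again $\Ci\vDash\psi$, and for each $M_2\in\Cii$ we have $M_2\not\sim M_1$ for every $M_1\in\Ci$, so $M_2\not\vDash\chi_{M_1}$ for any such $M_1$; hence $M_2\vDash\neg\psi$, giving $\Cii\vDash\neg\psi$. Thus $\psi$ is a separating formula, so $\Ci\not\asymp_{\Linf}\Cii$.

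\textbf{Main obstacle.} The substantive step is the construction of $\chi_M$ and verifying that it pins down the $\sim$-class of $M$. This works only because $\Linf$ is assumed to admit \emph{class}-size conjunctions (the set of $\Linf$-formulas true at $M$ need not be a set), and because we may assume closure under negation to upgrade ``every formula true at $M$ is true at $N$'' to full $\equiv_{\Linf}$. Once $\chi_M$ is in hand, the rest is bookkeeping with class-size disjunctions, which $\Linf$ again permits.
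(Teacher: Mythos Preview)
Your proposal is correct and follows essentially the same line as the paper: argue the converse directions by contraposition and build a separating $\Linf$-formula via class-size Boolean combinations. The paper picks, for each pair $(M,N)$ with $M\not\sim N$, a single separating formula $\varphi_{M,N}$ and forms $\psi=\bigvee_{M\in\Ci}\bigwedge_{N\in\Cii}\varphi_{M,N}$; you instead package the inner conjunction once and for all as the characteristic formula $\chi_M=\bigwedge\{\varphi\in\Linf\mid M\vDash\varphi\}$ and take $\psi=\bigvee_M\chi_M$. Your key lemma $N\vDash\chi_M\iff M\sim N$ is exactly what makes $\chi_M$ play the role of $\bigwedge_N\varphi_{M,N}$, so the two constructions coincide in spirit.

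One small point of comparison you already flag: the paper's $\psi$ is indexed only over the model classes $\Ci,\Cii$, whereas your $\chi_M$ is a conjunction indexed over the formulas of $\Linf$ themselves. Since $\Linf$ is stipulated to be closed under class-size conjunctions, the collection of $\Linf$-formulas is a proper class, and taking $\chi_M$ as a conjunction over that class is mildly delicate (one would want $\chi_M\in\Linf$, but $\chi_M$ is then among its own conjuncts). This is harmless in practice---you can replace $\chi_M$ by the conjunction of one separating formula per $\sim$-class distinct from $[M]$, which is indexed only by $\M/{\sim}$---but the paper's formulation sidesteps the issue entirely by never conjoining over $\Linf$. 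Otherwise your argument is complete, and in fact you spell out the case~(1) that the paper leaves to the reader.
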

\begin{proof} We again only show (2), and due to Theorem~\ref{DimpE} we only need to show $\Rightarrow$.
Suppose $\Ci\asymp_{\Linf}\Cii$. Towards contradiction, suppose there are no models $M\in \Ci$, $N\in\Cii$ such that $M\sim N$. Then since $\forall M,N: M\sim N\iff M\equiv_L N$, we can find a differential formula $\varphi_{M,N}$ in $\Linf$ for each $M\in \Ci$ and $N\in \Cii$ such that $M\vDash \varphi_{M,N}$ but $N\nvDash\varphi_{M,N}$. Let $\psi$ be $\bigvee_M\bigwedge_{N}\varphi_{M,N}$ then $\Ci\vDash \psi$ but $\Cii\vDash \neg \psi$, contradiction.
\qed\end{proof}

We call a class of models $\mc{C}$ \textit{$L$-compact}, if for any set of $L$-formulas $\Sigma$: every finite subset of $\Sigma$ is satisfiable in $\mc{C}$ implies that $\Sigma$ itself is satisfiable in $\mc{C}$. For example, any class of finitely many mutually $L$-nonequivalent models is $L$-compact.\footnote{If not, there is an infinite set $\Sigma$ of $L$-formulas such that each finite subset of $\Sigma$ is satisfiable, but $\Sigma$ is not satisfiable as a whole. Then, for each equivalence class $[M]$ of $\equiv_L$ in $\mc{C}$, pick a $\phi_{[M]}\in\Sigma$ that does not hold on the models in $[M]$. Let $\Delta=\{\varphi_{[M]}|M\in\mc{C}\}\subseteq\Sigma$, then $\Delta$ is finite but not satisfiable on $\mc{C}$. Contradiction.} In the following we always assume that $L$ contains binary conjunction $\land$ with its normal semantics. We show that if we restrict to $L$-compact classes then $\lftA{\sim}$ and $\lftE{\sim}$ match $\equivC{L}$ and $\asymp_{L} $ exactly, given that $\sim$ matches $\equiv_L$ exactly.  To obtain this result, we use the following lemma: 

\begin{lem}\label{lem-compactC}
If $\forall M,N\in\mc{M}: M\sim N\iff M\equiv_L N$ then for any $L$-compact $\mc{C}\subseteq\M$ and $N\in\M$, (1) implies (2):
\begin{enumerate}
\item for all $\varphi\in L: \mc{C}\vDash\varphi\implies N\vDash\varphi$.
\item there is a model $M\in \mc{C}$ such that $M\sim N$.
\end{enumerate}
\end{lem}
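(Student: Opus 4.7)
The plan is to apply $L$-compactness of $\mc{C}$ to the $L$-theory of $N$, namely $\Sigma := \{\varphi \in L : N \vDash \varphi\}$, so as to extract a model $M \in \mc{C}$ with $M \equiv_L N$; the hypothesis that $\sim$ and $\equiv_L$ coincide then delivers $M \sim N$, which is (2).

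First I would show that each finite subset $\Sigma_0 = \{\varphi_1, \ldots, \varphi_n\} \subseteq \Sigma$ is satisfiable in $\mc{C}$. Setting $\psi := \varphi_1 \land \cdots \land \varphi_n \in L$, if no $M \in \mc{C}$ satisfied $\psi$ then $\mc{C} \vDash \neg\psi$, so by hypothesis (1), $N \vDash \neg\psi$, contradicting $N \vDash \psi$ (which holds because each $\varphi_i \in \Sigma$). Hence $\Sigma_0$ is satisfiable in $\mc{C}$, and $L$-compactness lifts this to the whole of $\Sigma$: pick $M \in \mc{C}$ with $M \vDash \Sigma$, i.e.\ $\Sigma \subseteq \{\varphi \in L : M \vDash \varphi\}$.

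It remains to upgrade this inclusion to equality of $L$-theories. If some $\varphi \in L$ had $M \vDash \varphi$ but $N \nvDash \varphi$, then $\neg\varphi \in \Sigma$, whence $M \vDash \neg\varphi$, contradicting $M \vDash \varphi$. So $M$ and $N$ satisfy exactly the same $L$-formulas, i.e.\ $M \equiv_L N$, and the assumption of the lemma yields $M \sim N$.

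The step I expect to be the main obstacle is the reliance on negation in $L$: both the compactness argument (passing from ``no $M \in \mc{C}$ satisfies $\psi$'' to $\mc{C} \vDash \neg\psi$) and the reverse-inclusion argument use that the complement of any $L$-property is again $L$-definable, whereas the preamble to the lemma only records closure under $\land$. In writing the proof I would either make this extra hypothesis explicit or, equivalently, understand $\equiv_L$ as equivalence with respect to the Boolean closure of $L$; this is harmless for all the concrete logics treated in the paper.
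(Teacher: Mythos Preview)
Your argument is correct and essentially the same as the paper's: both take $\Sigma=\{\varphi\in L: N\vDash\varphi\}$ and use $L$-compactness of $\mc{C}$ together with (1) to produce $M\in\mc{C}$ with $M\equiv_L N$, hence $M\sim N$; the paper merely phrases it as a proof by contradiction rather than directly. Your caveat about negation is well taken and applies equally to the paper's proof, which tacitly passes from ``$\bigwedge\Delta$ is not satisfiable in $\mc{C}$'' to an application of (1), and from ``$M\vDash\Sigma$'' to ``$M\equiv_L N$''.
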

\begin{proof}
Suppose (1), and towards contradiction suppose there is no model in $\mc{C}$ such that $M\sim N$. Let $\Sigma=\{\varphi\mid N\vDash \varphi\}.$ Since $\forall M,N\in\mc{M}: M\sim N\iff M\equiv_L N$, $\Sigma$ is not satisfiable in $\mc{C}$, otherwise there is a model $M\in\mc{C}$ such that $M\sim N$. Since $\mc{C}$ is $L$-compact, there is a finite subset $\Delta$ of $\Sigma$ such that $\Delta$ is not satisfiable in $\mc{C}$. Clearly, $\mc{C}\subseteq\{M\in\M|M\nvDash\bigwedge\Delta\}$. From (1) we know $N\nvDash\bigwedge\Delta$ contradictory to $N\vDash\Sigma$.
\qed\end{proof}
\begin{thm}\label{thm-compactC}
Suppose $\forall M,N\in\mc{M}: M\sim N\iff M\equiv_L N$. For $L$-compact classes $\Ci$ and $\Cii$ of $\M$ the following hold: 
\begin{enumerate}
\item  $ \Ci\lftA{\sim}\Cii\iff \Ci\equivC{L}\Cii$
\item $ \Ci\lftE{\sim}\Cii\iff \Ci\asymp_L \Cii$
\end{enumerate}
\end{thm}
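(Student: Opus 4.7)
The plan is to prove each biconditional by handling the forward direction via Theorem~\ref{DimpE} and the backward direction via Lemma~\ref{lem-compactC}, with the help of $L$-compactness on one or both sides.

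For (1), the forward direction $\Ci\lftA{\sim}\Cii\Rightarrow\Ci\equivC{L}\Cii$ is already given by Theorem~\ref{DimpE}(1); I would just cite it. For the converse, assume $\Ci\equivC{L}\Cii$ and pick an arbitrary $M_1\in\Ci$. I want to produce an $\sim$-partner $M_2\in\Cii$, so I would apply Lemma~\ref{lem-compactC} to the $L$-compact class $\Cii$ and the model $M_1$. The lemma's hypothesis reads: for all $\varphi\in L$, $\Cii\vDash\varphi\Rightarrow M_1\vDash\varphi$. This is immediate from $\Ci\equivC{L}\Cii$ (which yields $\Cii\vDash\varphi\Rightarrow\Ci\vDash\varphi$) together with $M_1\in\Ci$. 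By symmetry, swapping the roles of $\Ci$ and $\Cii$ gives the matching partner for each $M_2\in\Cii$, establishing $\Ci\lftA{\sim}\Cii$.

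For (2), again the forward direction is Theorem~\ref{DimpE}(2). For the converse, I would argue by contradiction: assume $\Ci\asymp_L\Cii$ but that no pair $(M,N)\in\Ci\times\Cii$ satisfies $M\sim N$. Then for every fixed $N\in\Cii$, Lemma~\ref{lem-compactC} (applied to the $L$-compact $\Ci$ and the witness $N$, reading its contrapositive) delivers a formula $\varphi_N\in L$ with $\Ci\vDash\varphi_N$ but $N\nvDash\varphi_N$. Now consider the set $\Sigma=\{\varphi_N\mid N\in\Cii\}$. No model of $\Cii$ can satisfy all of $\Sigma$, because any such $N'\in\Cii$ would in particular satisfy $\varphi_{N'}$, contradicting $N'\nvDash\varphi_{N'}$. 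By $L$-compactness of $\Cii$, some finite subset $\Delta\subseteq\Sigma$ is already unsatisfiable in $\Cii$, so $\Cii\vDash\neg\bigwedge\Delta$. But $\Ci\vDash\bigwedge\Delta$ because $\Ci$ satisfies every member of $\Delta$. The formula $\bigwedge\Delta$ is thus a separating formula for $\Ci$ and $\Cii$, contradicting $\Ci\asymp_L\Cii$ (using $\sepL{\Ci}{L}{\Cii}\iff \Ci\not\asymp_L\Cii$).

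The main obstacle is part (2): we need compactness on $\Cii$ (not $\Ci$) to pass from the family $\{\varphi_N\}_{N\in\Cii}$ to a single separating conjunction, while we already used compactness on $\Ci$ inside Lemma~\ref{lem-compactC} to obtain each individual $\varphi_N$. The argument crucially exploits that both classes are $L$-compact, which is why the theorem's hypothesis imposes compactness symmetrically. Part (1), by contrast, only ever applies Lemma~\ref{lem-compactC} to one class at a time, so compactness of both classes is used but never simultaneously in a single step.
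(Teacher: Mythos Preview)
Your proposal is correct and follows essentially the same route as the paper: the forward directions are cited from Theorem~\ref{DimpE}, and for the converse of (2) you use Lemma~\ref{lem-compactC} (in contrapositive form, exploiting compactness of $\Ci$) to produce the formulas $\varphi_N$, then compactness of $\Cii$ to extract a finite separating conjunction---exactly as the paper does. Your treatment of (1) spells out what the paper merely calls ``similar but simpler,'' and is the intended argument.
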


\begin{proof}
We only show the direction $\Leftarrow$ in (2). The proof of (1) is similar but simpler.

Suppose $\Ci$ and $\Cii$ are $L$-compact and $\Ci\asymp_L \Cii$. Now suppose towards contradiction that $\Ci\lftE{\not\sim}\Cii$. Then there are no models $M\in\Ci$ and $N\in\Cii$ such that $M\sim N.$ We claim the following:
$$\textrm{ for each } N\in\Cii \textrm{ there is some }\varphi_N\in L \textrm{ such that }\Ci\vDash\varphi_N \textrm{ and }N\nvDash \varphi_N $$
Suppose not, then there would be a model $N\in\Cii$ such that for all $\varphi:\Ci\vDash\varphi_N \implies N\vDash \varphi_N$. From Lemma~\ref{lem-compactC} there is $M\in \Ci$ such that $M\sim N$, contradictory to the assumption.

Let $\Sigma$ be the collection of such formulas: $\{\varphi_N\mid N\in\Cii\}$.
Clearly, $\Sigma$ is not satisfiable in $\Cii$, but $\Ci\vDash\Sigma$. Since $\Cii$ is $L$-compact, there is a finite subset $\Delta$ of $\Sigma$ which is not satisfiable in $\Cii$.  It is easy to see that $\Cii\subseteq \{M\mid M\nvDash\bigwedge\Delta\}$. However, $\Ci\vDash\bigwedge\Delta$. Therefore $\Ci\not\asymp_L \Cii$, contradiction.
\qed\end{proof}
From the definition of $L$-compact classes, it is straightforward to see that:
\begin{prop}\label{prop-CC2CL}
Suppose $L$ contains negation. If for every class $\mc{C}\subseteq\M$ there is an $L$-compact class $\mc{C}'$ such that $\mc{C}\equivC{L}\mc{C}'$, then $L$ is compact. 
\end{prop}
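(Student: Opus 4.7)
The plan is to verify directly that $\M$ itself is $L$-compact, which I take to be precisely the content of the conclusion ``$L$ is compact.''

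First, I would apply the hypothesis to the class $\mc{C}=\M$ itself, obtaining an $L$-compact subclass $\mc{C}'\subseteq\M$ with $\M\equivC{L}\mc{C}'$.

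Next, let $\Sigma$ be any set of $L$-formulas that is finitely satisfiable in $\M$; the goal is to show $\Sigma$ is satisfiable in $\mc{C}'$, and hence in $\M$. For each finite $\Delta\subseteq\Sigma$ I form $\chi_\Delta=\bigwedge\Delta$, which is an $L$-formula since $L$ contains binary conjunction. Finite satisfiability of $\Delta$ in $\M$ is equivalent to $\M\nvDash\neg\chi_\Delta$; this is where negation is used, to convert existential satisfaction into non-validity at the class level. By the class equivalence $\M\equivC{L}\mc{C}'$, also $\mc{C}'\nvDash\neg\chi_\Delta$, so some $M'\in\mc{C}'$ satisfies $\chi_\Delta$, witnessing that $\Delta$ is satisfiable in $\mc{C}'$.

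Finally, since every finite subset of $\Sigma$ is satisfiable in $\mc{C}'$ and $\mc{C}'$ is $L$-compact by construction, the defining property of $L$-compactness yields that $\Sigma$ itself is satisfiable in $\mc{C}'\subseteq\M$, as required.

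The main obstacle is rather mild: one must carefully use negation to rephrase ``$\Delta$ is satisfiable in $\mc{X}$'' as ``$\mc{X}\nvDash\neg\chi_\Delta$,'' a condition which $\equivC{L}$ does preserve; without negation this transfer from $\M$ to $\mc{C}'$ would fail. Binary conjunction plays the parallel role of packaging each finite $\Delta$ as a single formula so that the defining property of $L$-compactness, stated for sets of formulas, can be invoked on the original set $\Sigma$ itself.
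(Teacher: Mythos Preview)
Your proof is correct and follows essentially the same route as the paper's: both transfer finite satisfiability from the original class to the $L$-compact $\mc{C}'$ via the equivalence $\equivC{L}$ applied to the formula $\neg\bigwedge\Delta$, and then invoke $L$-compactness of $\mc{C}'$. Your version is in fact slightly cleaner, since you explicitly instantiate the hypothesis with $\mc{C}=\M$, whereas the paper leaves this choice implicit.
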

\begin{proof}
Given an arbitrary set of $L$-formulas $\Sigma$, if every finite subset of $\Sigma$ is satisfiable in $\mc{C}$ then every finite subset of $\Sigma$ is satisfiable in the compact $\mc{C}'$ (otherwise there would be a finite $\Delta\subseteq\Sigma$ such that $\mc{C}\nvDash\neg\bigwedge\Delta$ but $\mc{C}'\vDash\neg\bigwedge\Delta$, which is a contradiction). Since $\mc{C}'$ is compact, there is a model $M\in \mc{C}'$ such that $M\vDash\Sigma$. Thus $L$ is compact.
\qed\end{proof}

\begin{prop}\label{prop-CdefC}
 If $L$ is compact, then $L$ only defines $L$-compact classes. 
\end{prop}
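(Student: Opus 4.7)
The plan is to reduce $L$-compactness of a definable class $\mc{C}\subseteq\M$ to the assumed compactness of $L$ on $\M$ itself by absorbing the defining formula into the set of constraints. So I would start by fixing a class $\mc{C}$ defined by $L$ and choosing a formula $\varphi\in L$ such that $\mc{C}=\{M\in\M\mid M\vDash\varphi\}$ (using the characterisation of definability as the existence of a separating formula for $\mc{C}$ and $\overline{\mc{C}}$, and picking the side on which the formula is true). Then I would take an arbitrary set $\Sigma$ of $L$-formulas and assume every finite subset of $\Sigma$ is satisfiable in $\mc{C}$; the goal is to show $\Sigma$ itself is satisfiable in $\mc{C}$.

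The key step is to consider $\Sigma'=\Sigma\cup\{\varphi\}$ and observe that every finite subset $\Delta'\subseteq\Sigma'$ is satisfiable in $\M$: write $\Delta'=\Delta\cup\{\varphi\}$ with $\Delta\subseteq\Sigma$ finite, and pick a witness $M\in\mc{C}$ for $\Delta$, which necessarily satisfies $\varphi$ as well, hence satisfies $\Delta'$ in $\M$. Now the compactness of $L$ applied to $\Sigma'$ yields some $N\in\M$ with $N\vDash\Sigma'$; in particular $N\vDash\varphi$, so $N\in\mc{C}$, and $N\vDash\Sigma$. Thus $\Sigma$ is satisfiable in $\mc{C}$ as required.

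I expect no serious obstacle here; the only subtlety is the pedantic point that a separating formula for $\mc{C}$ and $\overline{\mc{C}}$ could be oriented either way, but in the direction where $\varphi$ is true exactly on $\mc{C}$ the argument goes through directly, and this is the standard reading of ``$L$ defines $\mc{C}$''. Note also that the argument uses only that $L$ is closed under (at most) adding a single formula to a set, not conjunction, so the standing assumption that $L$ contains $\wedge$ is not needed for this proposition.
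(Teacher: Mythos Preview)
Your proposal is correct and follows essentially the same argument as the paper: both add the defining formula $\varphi$ to $\Sigma$, use compactness of $L$ on $\M$ to obtain a model of $\Sigma\cup\{\varphi\}$, and conclude that this model lies in $\mc{C}$. The only cosmetic difference is that the paper phrases it as a proof by contradiction (assuming a non-$L$-compact definable class) whereas you argue directly.
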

\begin{proof}
 Suppose $L$ is compact and defines the non-compact class $\mc{C}\subseteq{\mc{M}}$ by $\varphi\in L$. Let $\Sigma$ be a witness to the non-$L$-compactness of $\mc{C}$, i.e.\ every finite subset of $\Sigma$ is satisfiable on $\mc{C}$, but not $\Sigma$ itself. Note that every finite subset of $\Sigma'=\Sigma\cup\{\varphi\}$ is then satisfiable on $\mc{C}$, and therefore on $\mc{M}$. By compactness of $L$, $\Sigma'$ is then satisfied by some $M\in\mc{M}$. But then $M\vDash\Sigma$ and $M\in\mc{C}$ by $M\vDash\varphi$. Contradiction.
\qed\end{proof}

\begin{thm}\label{thm-main}
 If $L_2$ is a language extension of $L_1$  ($L_1\subseteq L_2$), then:
\begin{enumerate}
\item An $L_2$-compact class is also $L_1$-compact. 
\item If $L_1$ and $L_2$ are compact and contain $\neg$ then they have the same distinguishing power iff they have the same expressive power.
\item If $L_1$ and $L_2$ have the same distinguishing power and contain $\neg$ then: $L_1\salaex L_2\iff L_2$ defines some non-$L_2$-compact class which can not be defined by $L_1$. 
\end{enumerate}
\end{thm}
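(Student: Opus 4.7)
Part~(1) follows directly from the definitions: since every $\Sigma\subseteq L_1$ is a subset of $L_2$, applying the $L_2$-compactness of $\mathcal{C}$ to $\Sigma$ immediately yields its $L_1$-compactness. Part~(2) I would derive from part~(3): under the compactness of both $L_1$ and $L_2$, Proposition~\ref{prop-CdefC} makes every $L_i$-definable class $L_i$-compact, so the right-hand side of part~(3) can be witnessed in neither ordering and hence rules out both $L_1\salaex L_2$ and $L_2\salaex L_1$; together with $L_1\eqdi L_2$ (which via Theorem~\ref{thm-compactC} applied symmetrically to $L_1$- and $L_2$-compact classes lets one match $\asymp_{L_1}$ and $\asymp_{L_2}$ on definable classes) this pins down $L_1\eqex L_2$. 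The converse direction of part~(2) is the corollary to Theorem~\ref{Expr=ClassDist}.

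For the forward direction of part~(3), pick $\psi\in L_2$ with no $L_1$-equivalent and set $\mathcal{C}=\{M\mid M\vDash\psi\}$. Both $\mathcal{C}$ and $\overline{\mathcal{C}}$ are $L_2$-definable, and since $\neg\in L_1$ neither is $L_1$-definable (else both would be). I will show that at least one of them is non-$L_2$-compact; suppose for contradiction both are. Let $\sim\,=\,\equiv_{L_1}\,=\,\equiv_{L_2}$ (Theorem~\ref{IndEquiv=Indist}). Using $L_1\alaex L_2$ from $L_1\salaex L_2$, fix an $L_2$-equivalent $\psi_\varphi$ for each $\varphi\in L_1$; the translation $\varphi\mapsto\psi_\varphi$ turns any $L_1$-theory finitely satisfiable on $\mathcal{C}$ into an $L_2$-theory with the same property, so $L_2$-compactness of $\mathcal{C}$ promotes to $L_1$-compactness of $\mathcal{C}$.

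Since $\mathcal{C}$ is $\sim$-closed, Lemma~\ref{lem-compactC} applied with $L=L_1$ to each $N\in\overline{\mathcal{C}}$ yields $\varphi_N\in L_1$ with $\mathcal{C}\vDash\varphi_N$ and $N\nvDash\varphi_N$. To extract a single defining $L_1$-formula I invoke $L_2$-compactness of $\overline{\mathcal{C}}$: the family $\{\psi_{\varphi_N}\mid N\in\overline{\mathcal{C}}\}\subseteq L_2$ has no satisfier in $\overline{\mathcal{C}}$ (a satisfier $N^*$ would force $N^*\vDash\psi_{\varphi_{N^*}}$, hence $N^*\vDash\varphi_{N^*}$, contradicting the choice of $\varphi_{N^*}$), so some finite subset is unsatisfiable on $\overline{\mathcal{C}}$; its $L_1$-preimage $\bigwedge_{i\le k}\varphi_{N_i}$ then holds on $\mathcal{C}$ and fails everywhere on $\overline{\mathcal{C}}$, defining $\mathcal{C}$ in $L_1$ and contradicting the choice of $\psi$.

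For the backward direction of part~(3), $L_2\not\alaex L_1$ is immediate from the witness; the step I expect to be the main obstacle is establishing the remaining $L_1\alaex L_2$ required for strictness. I would argue by contradiction, using a dual compactness/gluing argument to turn a putative $\varphi\in L_1$ without $L_2$-equivalent into a non-$L_1$-compact $L_1$-definable class, and then reconciling this with the given $L_2$-side witness under only $L_1\eqdi L_2$; finding the right reconciliation without additional assumptions is the delicate point of the whole theorem.
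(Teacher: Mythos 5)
Your parts (1) and (2) and the forward direction of (3) are essentially sound and track the paper's argument. The paper proves (2) directly by combining Proposition~\ref{prop-CdefC} (compact logics define only compact classes), part (1), and Theorem~\ref{thm-compactC} applied twice with $\sim\,=\,\equiv_{L_1}\,=\,\equiv_{L_2}$, and then disposes of (3) "with similar technique"; you instead prove the forward direction of (3) first, by unrolling Theorem~\ref{thm-compactC} through Lemma~\ref{lem-compactC} plus a compactness extraction of a finite conjunction, and then derive (2)'s hard direction from it. That reorganization is legitimate and arguably cleaner. One cosmetic point: your detour through the translations $\varphi\mapsto\psi_\varphi$ is unnecessary, since $L_1\subseteq L_2$ means every $L_1$-theory is literally an $L_2$-theory; that is exactly what part (1) already gives you.

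The genuine problem is the backward direction of (3), which you flag as ``the delicate point of the whole theorem'' and for which you offer only a vague ``dual compactness/gluing argument'' with no actual content. That direction is immediate, and the paper correctly dismisses it as trivial: $L_1\alaex L_2$ holds unconditionally because $L_1\subseteq L_2$ is a standing hypothesis of the theorem (each $\varphi\in L_1$ is its own $L_2$-equivalent), and the witness class --- $L_2$-definable but not $L_1$-definable --- directly refutes $L_2\alaex L_1$, since any $L_1$-formula equivalent to the defining $L_2$-formula would itself define the class. Together these give $L_1\salaex L_2$ with no compactness argument at all. As written, your proposal leaves this step unproven and, worse, proposes to attack a situation (an $L_1$-formula with no $L_2$-equivalent) that cannot arise under the hypothesis $L_1\subseteq L_2$; you appear to have lost track of that hypothesis at this point, despite using it correctly in part (1).
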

\begin{proof}
For (1): trivially by $L_1\subseteq L_2$. 

For (2), $\Leftarrow$ is obvious; we need to show $\Rightarrow.$  Suppose $L_1$ and $L_2$ have the same distinguishing power. Then $\equiv_{L_1}=\equiv_{L_2}$. Since $L_2$ extends $L_1$, we only need to show $L_2\alaex L_1$. From Proposition~\ref{prop-CdefC} and the fact that $L_1$ and $L_2$ are compact, we know that $L_1$ and $L_2$ can only define $L_1$- and $L_2$-compact classes respectively. We only need to show that if $L_2$ can define an $L_2$-compact class $\mc{C}$, then $L_1$ can also define it,i.e.\ for any $L_2$-compact class $\mc{C}: \mc{C}\not\asymp_{L_2}\overline{\mc{C}}\implies \mc{C}\not\asymp_{L_1}\overline{\mc{C}}$. Now suppose for an $L_2$-compact class $\mc{C}:$ $\mc{C}\not\asymp_{L_2}\overline{\mc{C}}$. Since $L_2$ contains negation, from Proposition~\ref{prop-CdefC} it is easy to see that $\overline{\mc{C}}$ is also $L_2$-compact. From statement (1), $\mc{C}$ and $\overline{\mc{C}}$ are also $L_1$-compact.  From Theorem~\ref{thm-compactC}, we know that it is \textit{not} the case that $\mc{C}\lftE{(\equiv_{L_2})}\overline{\mc{C}}$.\footnote{cf.~Def.\ref{classlift}, replace $\sim$ by $\equiv_{L_2}$.} Since $\equiv_{L_1}=\equiv_{L_2}$, it is \textit{not} the case that $\mc{C}{\lftE{(\equiv_{L_1})}}\overline{\mc{C}}$. Again from Theorem~\ref{thm-compactC}, $\mc{C}{\not\asymp_{L_1}}\overline{\mc{C}}$.

For (3), $\Leftarrow$ is trivial. For $\Rightarrow,$ suppose towards contradiction that $L_1\salaex L_2$, but that for all classes $\mc{C}$ that are {\em not} $L_2$-compact: if $L_2$ can define $\mc{C}$, then so can $L_1$. Because $L_1\salaex L_2$, then there must be an $L_2$-compact class $\mc{C}$ which is definable by $L_2$ but not definable by $L_1$. From Proposition~\ref{prop-CdefC}, $\mc{C}$ and $\overline{\mc{C}}$ are both $L_1$- and $L_2$-compact. However, according to Theorem~\ref{thm-compactC}, with similar technique used in the proof of (2) we can show that $\mc{C}$ is then definable by $L_1$, contradiction. 
\qed\end{proof}

\paragraph{Discussion} Statements (2) and (3) of Theorem~\ref{thm-main} are closely related to some well-known results. Recent results on Lindstr\"{o}m-type theorems for fragments $L$ of first-order logic are often formalized in the following form~\cite{BenthemMLLind07,LindTCate,LindTfragment}:
\begin{center}
\textit{A logic $L'$ that extends $L$ has equal expressive power as $L$ iff:\\
 (a) $L'$ is compact and (b) $L'$-formulas are invariant for some $\sim$. }
\end{center}
It follows from statement (2) that for compact logics, in order to obtain such theorem from right to left, we only need to show that the extended logic $L'$ has the same {\em distinguishing power} as $L$, which in principle is weaker. 
This gives an alternative perspective on the classical model theoretic arguments in~\cite{ModelModalHML,MCSF}, that use compactness for proving characterization results. Condition (b) helps to limit the distinguishing power with the presence of (a) but with some hurdles to overcome, cf.\ the discussions in~\cite{LindTfragment}.

Statement (3) gives us a hint on how to show that two equally distinguishing logics have different expressive power. This task is often considered hard. However, due to statement (3), we only need to concentrate on non-$L$-compact classes. This could explain the complicated constructions in many works on expressivity of temporal logic. For example, in~\cite{BrowneClarkeGr87}, it is shown that if a $\CTLstar$ formula does not correspond to a $\CTL$ formula then it must have an infinite number of mutually nonequivalent finite models. Now this is just a straightforward consequence of our general result.

\section{Application: Class Bisimulation and Modal Logic}\label{secbis}
In this section, we focus on \textit{class bisimulation} $\lftE{\bis}$, the lifted notion of bisimulation $\bis$. We will demonstrate that such particular structural relations on classes can be studied in a similar way as the equivalence relations on models. As we will see in the following, most of the definitions and results about bisimulation and modal logic can be adapted to the corresponding ones for class bisimulation. As an example, we show a class bisimulation characterization of modal definability similar to the previous result in \cite{UUpuremodal}.

\subsection{Standard Bisimulation}
We first recall some standard results of bisimulation and modal logic, which we will use intensively in the next subsection.
Let $M$ and $N$ be two Kripke models with set of states $S$ and $T$. A binary (symmetric) relation $R \subseteq S\times T$ is a \textit{bisimulation} relation if $sRt$ implies:
\begin{enumerate}
\item $V(s)=V(t)$
\item if $s\to s'$ in $M$ then there is a $t'$ such that $t\to t'$ in $N$ and $s'Rt'$;
\item if $t\to t'$ in $M$ then there is a $s'$ such that $s\to s'$ in $N$ and $t'Rs'$.
\end{enumerate}
For two pointed models $M,s$ and $N,t$, we say $M,s$ is bisimilar to $N,t$ ($M,s\bis N,t$) if there is a bisimulation $R$ between $M$ and $N$ and $(s,t)\in R$. Note that in this section we will only talk about \textit{pointed Kripke models}. We may also call them \textit{pointed models} or simply \textit{models} for short.

We call a pointed Kripke model $M,s$ \textit{modally-saturated} or simply \textit{m-saturated} if for any state $w\in M$ and any set of $\ML$ formula $\Sigma$: if every finite subset of $\Sigma$ is satisfiable at the successors of $w$ then $\Sigma$ is satisfiable at some successor of $w$. We say a class of models $\mc{M}$ for a logic $L$ has the \textit{Hennessy-Milner property} for $\ML$ if for all models $M,s,N,t\in\mc{M}$:$$M,s\equiv_\ML N,t \Rightarrow M,s\bis N,t.$$

Here we list some standard results cf.~\cite{mlbook}. First, bisimulation is an invariance for \ML:
\begin{thm}\label{bisinv}
If pointed Kripke models $M,s\bis N,t$ then $M,s\equiv_{\ML} N,t.$
\end{thm}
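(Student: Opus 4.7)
The plan is a straightforward induction on the structure of $\ML$-formulas $\varphi$. Fix a bisimulation $R$ between $M$ and $N$ witnessing $M,s \bis N,t$, so $(s,t) \in R$. I would strengthen the statement to: for every $\varphi \in \ML$ and every pair $(u,v) \in R$, $M,u \vDash \varphi$ iff $N,v \vDash \varphi$. Instantiating at $(s,t)$ yields the theorem.

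For the base case $\varphi = p$, I invoke clause (1) of the definition of bisimulation: $V(u) = V(v)$, so $p \in V(u)$ iff $p \in V(v)$. The Boolean cases $\neg \psi$ and $\psi_1 \land \psi_2$ follow immediately from the inductive hypothesis together with the two-valued semantics; no further appeal to $R$ is needed.

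The key case is $\varphi = \dia\psi$. Assume $M,u \vDash \dia\psi$; then there is $u'$ with $u \to u'$ in $M$ and $M,u' \vDash \psi$. By clause (2) (forth) applied to $(u,v) \in R$, there is $v'$ with $v \to v'$ in $N$ and $(u',v') \in R$. The inductive hypothesis gives $N,v' \vDash \psi$, hence $N,v \vDash \dia\psi$. The converse direction is entirely symmetric, using clause (3) (back) of the bisimulation definition.

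There is no real obstacle here; the proof is a textbook induction and the only subtlety is making sure the inductive hypothesis is formulated for \emph{all} pairs in $R$ rather than only for $(s,t)$, since the modal step moves along $R$ to successor pairs. The result extends by the same argument to any collection of Boolean and unary modal connectives; it does not extend as stated to infinitary conjunctions, which is why the stronger correspondence with $\MLinf$ mentioned earlier in the paper requires a separate argument (and the Hennessy--Milner property for the converse in the image-finite or m-saturated setting).
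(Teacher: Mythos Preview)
Your proof is correct and is precisely the standard textbook argument: the paper does not give its own proof of this theorem but simply cites it as a well-known result from \cite{mlbook}, and the induction on the structure of $\ML$-formulas you outline (strengthened to all pairs in the bisimulation) is exactly the proof found there. Your remark about needing the hypothesis for all $(u,v)\in R$ rather than just $(s,t)$ is the one point worth making explicit, and you have done so.
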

Let $\ue(M),\pi_s$ be the \textit{pointed} \textit{ultrafilter extension}\footnote{An ultrafilter extension of a Kripke model $M=\la W,R,V\ra$ is again a Kripke model where the set of worlds is the set of ultrafilters over $W$; two ultrafilters $u$ and $v$ are linked by a relation iff $X\in v$ implies the set of points which \textit{sees} $X$ through a relation in $M$ is in $u$. The valuation of proposition letter $p$ at $u$ is true iff $V(p)\in u$. For the formal definition of ultrafilter extension cf.~\cite{mlbook}.} of a model $M,s$, with the principal ultrafilter $\pi_s$ generated by $s$ as its designated point. We know that:
\begin{thm}\label{ueequiv}
For any pointed model $M,s:$ $M,s\equiv_{\ML}\ue(M),\pi_s$ and $\ue(M),\pi_s$ is m-saturated.
\label{uethem}
\end{thm}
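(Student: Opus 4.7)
The plan is to prove both conjuncts together through the standard \emph{truth lemma} for ultrafilter extensions. Writing $\lb\varphi\rb^M = \{w \in W \mid M,w \vDash \varphi\}$ and $m_R(X) = \{w \in W \mid \exists x \in X,\ wRx\}$, the key auxiliary claim is that for every modal formula $\varphi$ and every ultrafilter $u$ over $W$,
\[
\ue(M), u \vDash \varphi \iff \lb\varphi\rb^M \in u.
\]
Specialising to $u = \pi_s$ yields $\ue(M),\pi_s \vDash \varphi$ iff $s \in \lb\varphi\rb^M$ iff $M,s \vDash \varphi$, which is the first conjunct $M,s \equiv_\ML \ue(M),\pi_s$.

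The truth lemma is proved by induction on $\varphi$. The atomic case follows from the definition of the valuation on $\ue(M)$; the Boolean cases use that an ultrafilter is closed under finite intersection and contains exactly one of $X$ and $W \setminus X$ for every $X \subseteq W$. The diamond step is the crux. One direction is immediate from the definition of the lifted accessibility $R^{\ue}$: if $u R^{\ue} v$ and $\lb\varphi\rb^M \in v$, then $m_R(\lb\varphi\rb^M) = \lb\Diamond\varphi\rb^M \in u$. For the converse, assuming $\lb\Diamond\varphi\rb^M \in u$, one must produce an ultrafilter $v$ with $\lb\varphi\rb^M \in v$ and $u R^{\ue} v$; this is done by applying the ultrafilter theorem to the family $\{\lb\varphi\rb^M\} \cup \{Y \subseteq W \mid W \setminus m_R(W \setminus Y) \in u\}$ after checking its finite intersection property---a short one-step calculation using that $u$ is a filter.

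For $m$-saturation, fix an ultrafilter $u$ in $\ue(M)$ and a set $\Sigma$ of modal formulas such that every finite $\Delta \subseteq \Sigma$ is satisfied at some $R^{\ue}$-successor of $u$. The plan is to consider
\[
\mathcal{F} = \{\lb\varphi\rb^M \mid \varphi \in \Sigma\} \cup \{Y \subseteq W \mid W \setminus m_R(W \setminus Y) \in u\}.
\]
Applying the truth lemma to the hypothesis gives $m_R(\lb\bigwedge\Delta\rb^M) \in u$ for each finite $\Delta \subseteq \Sigma$, and intersecting with finitely many sets of the second form---whose box-preimages sit in the filter $u$---produces a non-empty witness in $W$, so $\mathcal{F}$ has the finite intersection property. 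Extending $\mathcal{F}$ to an ultrafilter $v$ gives $u R^{\ue} v$ (from the second component, via the ultrafilter dichotomy on $v$) and $\ue(M),v \vDash \varphi$ for every $\varphi \in \Sigma$ (from the first, via the truth lemma). The main obstacle is the diamond step of the truth lemma itself, where the ultrafilter theorem supplies the witnessing successor; once that is in place, $m$-saturation is an FIP reprise of the same construction.
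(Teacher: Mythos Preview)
Your proof sketch is correct and follows exactly the standard textbook argument. Note, however, that the paper does not give its own proof of this theorem: it is listed among the ``standard results'' quoted from \cite{mlbook} at the start of Section~\ref{secbis}, so there is no paper proof to compare against. Your truth-lemma-plus-FIP argument is precisely the proof one finds in that reference, so nothing more need be said.
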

The above Theorem implies the following \textit{bisimulation-somewhere-else} result:
\begin{thm}\label{bissomewhereelse}
For pointed models: $M,s\equiv_{\ML} N,t \iff \ue(M),\pi_s\bis \ue(N),\pi_t$.
\end{thm}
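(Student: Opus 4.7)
The plan is to prove both directions by chaining the previously stated results, using modal equivalence as a bridge.

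For the easier direction ($\Leftarrow$): assume $\ue(M),\pi_s\bis \ue(N),\pi_t$. By Theorem~\ref{bisinv}, bisimilarity implies modal equivalence, so $\ue(M),\pi_s\equiv_{\ML}\ue(N),\pi_t$. Theorem~\ref{ueequiv} gives $M,s\equiv_{\ML}\ue(M),\pi_s$ and $N,t\equiv_{\ML}\ue(N),\pi_t$. Since $\equiv_{\ML}$ is an equivalence relation, chaining these yields $M,s\equiv_{\ML} N,t$.

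For the main direction ($\Rightarrow$): assume $M,s\equiv_{\ML} N,t$. By Theorem~\ref{ueequiv}, $M,s\equiv_{\ML}\ue(M),\pi_s$ and $N,t\equiv_{\ML}\ue(N),\pi_t$, so by transitivity $\ue(M),\pi_s\equiv_{\ML}\ue(N),\pi_t$. The second half of Theorem~\ref{ueequiv} guarantees that both $\ue(M),\pi_s$ and $\ue(N),\pi_t$ are m-saturated. Now I invoke the Hennessy--Milner property for m-saturated models, namely that modal equivalence implies bisimilarity on m-saturated pointed models. This gives $\ue(M),\pi_s\bis \ue(N),\pi_t$, as required.

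The main obstacle (which is really the only non-bookkeeping step) is the appeal to the Hennessy--Milner property on the class of m-saturated models. This is a standard result (see e.g.~\cite{mlbook}) whose usual proof exhibits the relation $\equiv_{\ML}$ itself as a bisimulation between two m-saturated pointed models: atomic harmony is immediate, and the back-and-forth clauses follow from m-saturation applied to the set of modal formulas satisfied at a successor. Since this fact has been used implicitly in the discussion preceding the theorem (and is alluded to by the definition of m-saturation and the Hennessy--Milner property on the preceding page), I will cite it rather than reprove it. With that lemma in hand, the proof collapses into the two three-line chains described above.
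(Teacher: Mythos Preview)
Your argument is correct and is exactly the standard derivation the paper has in mind. The paper does not spell out a proof of this theorem; it lists it among ``standard results'' from~\cite{mlbook} and simply remarks that Theorem~\ref{ueequiv} implies it. The missing ingredient you correctly identify---the Hennessy--Milner property for m-saturated models---is in fact stated in the paper as Theorem~\ref{biscoi}, immediately \emph{after} Theorem~\ref{bissomewhereelse}. So rather than citing~\cite{mlbook} or sketching the back-and-forth argument yourself, you may simply invoke Theorem~\ref{biscoi} (accepting the mild forward reference, which is harmless since both are quoted as standard results).
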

Let $\mc{C}^m$ be the class of m-saturated pointed Kripke models. Bisimulation coincides with the induced equivalence relation of \ML\ on $\mc{C}^m$:
\begin{thm}\label{biscoi}
$\mc{C}^m$ enjoys Hennessy-Milner property for \ML. Thus for any $M,s$, $N,t\in \mc{C}^m: M,s\bis N,t \iff M,s\equiv_{\ML} N,t$.
\end{thm}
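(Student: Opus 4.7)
The $(\Rightarrow)$ direction is immediate: Theorem~\ref{bisinv} says that bisimilarity implies $\ML$-equivalence on arbitrary pointed Kripke models, so it certainly holds on $\mc{C}^m$. The work is all in the $(\Leftarrow)$ direction, so the plan is to show that $\mc{C}^m$ satisfies the Hennessy--Milner property. Given $M,s, N,t \in \mc{C}^m$ with $M,s \equiv_{\ML} N,t$, the natural candidate for a bisimulation is the relation
\[ R = \{(w,v) \in W_M \times W_N \mid M,w \equiv_{\ML} N,v\}, \]
which certainly contains $(s,t)$. It only remains to verify that $R$ is a bisimulation.

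Atomic harmony is free: since $\ML$ contains all proposition letters, $M,w \equiv_\ML N,v$ forces $V_M(w) = V_N(v)$. I will then concentrate on the forth clause (back being symmetric). Assume $wRv$ and $w \to w'$ in $M$, and let $\Sigma = \{\varphi \in \ML \mid M,w' \vDash \varphi\}$ be the full $\ML$-theory of $w'$. For any finite $\{\varphi_1,\dots,\varphi_n\} \subseteq \Sigma$ one has $M,w \vDash \Diamond(\varphi_1 \land \cdots \land \varphi_n)$; since $wRv$, the same diamond formula holds at $v$, hence the conjunction is satisfied at some successor of $v$. Thus every finite subset of $\Sigma$ is satisfiable at a successor of $v$.

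Here the hypothesis $N \in \mc{C}^m$ does the real work, which I expect to be the only non-routine step: by m-saturation applied to $v$, the entire set $\Sigma$ is satisfied at some single successor $v'$ of $v$. Because $\ML$ is closed under negation and $\Sigma$ is already a complete $\ML$-theory, $N,v' \vDash \Sigma$ upgrades to $M,w' \equiv_{\ML} N,v'$, so $w' R v'$ as required. The back clause is handled in exactly the same way using m-saturation of $M,w$. This yields that $R$ is a bisimulation with $(s,t) \in R$, hence $M,s \bis N,t$, completing the Hennessy--Milner property and therefore the biconditional stated in the theorem.
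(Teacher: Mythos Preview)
Your proof is correct and is exactly the standard argument for the Hennessy--Milner property of m-saturated models. Note that the paper does not actually prove this theorem: it is listed among the ``standard results'' cited from \cite{mlbook}, so there is no paper-proof to compare against; your argument is the one found in that reference.
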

The last standard result we want to mention here is that bisimulation coincides with induced equivalence relation of \MLinf\ (\ML\ with conjunction over any set):
\begin{thm}\label{MLinf}
$M,s\bis N,t \iff M,s\equiv_{\MLinf} N,t$
\end{thm}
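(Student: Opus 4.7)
The plan is to prove the two directions separately, with the forward direction being a routine extension of Theorem~\ref{bisinv} and the backward direction being the content of the result.

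For the forward direction ($\Rightarrow$), I would proceed by structural induction on $\MLinf$ formulas, showing that each formula of $\MLinf$ is invariant under $\bis$. Atomic, boolean, and modal cases are handled exactly as in Theorem~\ref{bisinv}. The only new case is the infinitary conjunction $\bigwedge_{i\in I}\varphi_i$: if each $\varphi_i$ is preserved under $\bis$ by the induction hypothesis, then for any $M,s\bis N,t$ we have $M,s\vDash\bigwedge_{i\in I}\varphi_i$ iff $M,s\vDash\varphi_i$ for all $i\in I$ iff $N,t\vDash\varphi_i$ for all $i\in I$ iff $N,t\vDash\bigwedge_{i\in I}\varphi_i$. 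Since $\MLinf$ formulas are built from these cases, $M,s\equiv_{\MLinf}N,t$ follows.

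For the backward direction ($\Leftarrow$), the strategy is to show that $\equiv_{\MLinf}$ itself is a bisimulation relation between the underlying Kripke frames of $M$ and $N$ (in fact between any two pointed Kripke models). Define $R\subseteq S\times T$ by $wRv$ iff $M,w\equiv_{\MLinf}N,v$. The atomic clause is immediate since every proposition letter is an $\MLinf$ formula. For the zig-clause, suppose $wRv$ and $w\to w'$ in $M$, and suppose towards contradiction that no successor $v'$ of $v$ in $N$ satisfies $w'\equiv_{\MLinf}v'$. Then for each successor $v'$ of $v$ there is an $\MLinf$-formula $\varphi_{v'}$ distinguishing $w'$ from $v'$; using negation (available in $\MLinf$) we may assume without loss of generality $M,w'\vDash\varphi_{v'}$ and $N,v'\nvDash\varphi_{v'}$. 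Form the class-indexed conjunction
\[
\psi \;=\; \bigwedge_{v' : v\to v'}\varphi_{v'},
\]
which is a legitimate $\MLinf$-formula since $\MLinf$ admits conjunctions of set-size. Then $M,w'\vDash\psi$ so $M,w\vDash\Diamond\psi$, while no successor of $v$ satisfies $\psi$ so $N,v\vDash\neg\Diamond\psi$, contradicting $wRv$. The zag-clause is symmetric. Hence $R$ is a bisimulation, and since $sRt$ by assumption, $M,s\bis N,t$.

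The main obstacle is essentially the same as in the proof of Theorem~\ref{thm-compactC} and of Theorem~\ref{Linf}: we must assemble a single separating formula out of possibly infinitely many formulas witnessing non-equivalence between $w'$ and each successor of $v$. This step is precisely what the infinitary conjunction of $\MLinf$ is designed for, so no compactness or m-saturation assumption is needed here (contrast with Theorem~\ref{biscoi}, where $\ML$ alone only suffices on m-saturated models). Apart from the bookkeeping of forming the set of successors of $v$ in $N$, the argument is entirely routine.
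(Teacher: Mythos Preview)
The paper does not give its own proof of this theorem: it is listed among the ``standard results cf.~\cite{mlbook}'' in Section~\ref{secbis} and simply stated without argument. Your proposal is correct and is precisely the textbook proof (e.g.\ the one in~\cite{mlbook}): invariance of $\MLinf$ under bisimulation by induction on formulas for $\Rightarrow$, and showing that $\equiv_{\MLinf}$ is itself a bisimulation for $\Leftarrow$, using an infinitary conjunction over the (set of) successors of $v$ to manufacture a separating $\Diamond$-formula. Since the paper offers no proof to compare against, there is nothing further to contrast; your write-up would serve perfectly well as the omitted standard argument.
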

Moreover, we can actually show that the class of $m$-saturated models is the biggest class which contains $\mc{C}^m$ and satisfies Hennessy-Milner property for \ML.
\begin{thm}\label{biggesthmclass}
For any class of pointed models (generated by the designated points) $\mc{C}\supset\mc{C}^m$, there exist models $M,s,N,t\in \mc{C}$ such that $M,s\equiv_\ML N,t$ but $M,s\not \bis N,t$.
\end{thm}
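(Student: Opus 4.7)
The plan is to exhibit, for any $\mc{C}\supset\mc{C}^m$, two models in $\mc{C}$ that witness the failure of the Hennessy-Milner property. Pick any $P,w\in\mc{C}\setminus\mc{C}^m$ and take as its partner the submodel $N$ of $\ue(P)$ generated by $\pi_w$, with $t:=\pi_w$. By Theorem~\ref{uethem}, $P,w\equiv_{\ML}\ue(P),\pi_w$, and since $N,t\bis\ue(P),\pi_w$ via the identity relation on $N$, Theorem~\ref{bisinv} gives $P,w\equiv_{\ML}N,t$. Moreover, generated submodels of m-saturated models are m-saturated (the out-edges of each reachable state are preserved, so the defining saturation condition transfers unchanged), hence $N,t\in\mc{C}^m\subseteq\mc{C}$.

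It remains to show $P,w\not\bis N,t$. Because $P,w\notin\mc{C}^m$, fix a witness to non-m-saturation: a state $v\in P$ and a set $\Sigma$ of $\ML$-formulas such that every finite $\Delta\subseteq\Sigma$ is satisfied at some successor of $v$, while no successor of $v$ satisfies all of $\Sigma$. Since $\mc{C}$ consists of pointed models generated by their designated points, $v$ is reachable from $w$ in $P$. Assume towards a contradiction that $R$ is a bisimulation between $P,w$ and $N,t$; iterating the forth condition along a path $w=v_0\to v_1\to\cdots\to v_k=v$ produces some $v'\in N$ with $vRv'$.

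Bisimulation invariance (Theorem~\ref{bisinv}) then yields $v\equiv_{\ML}v'$, so for every finite $\Delta\subseteq\Sigma$ the formula $\Diamond\bigwedge\Delta$ holds at $v'$; that is, every finite subset of $\Sigma$ is realised at some successor of $v'$. Since $N$ is m-saturated, $\Sigma$ itself is realised at some successor $u'$ of $v'$, and the back condition of $R$ at $(v,v')$ furnishes a successor $u$ of $v$ with $uRu'$; a second application of Theorem~\ref{bisinv} gives $u\vDash\Sigma$, contradicting the choice of $v,\Sigma$. The only real subtlety I expect is in tracking the witness $v$ to a bisimilar partner $v'$ on the other side, which is precisely where the hypothesis that $\mc{C}$ contains only generated pointed models is used; without it the witness could live in an unreachable part of $P$ and one would have to first massage the construction by passing to a generated submodel before invoking Theorem~\ref{uethem}.
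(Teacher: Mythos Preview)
Your proposal is correct and follows essentially the same approach as the paper: pick a non-m-saturated model in $\mc{C}$, pair it with (the point-generated submodel of) its ultrafilter extension, which is m-saturated and hence lies in $\mc{C}^m\subseteq\mc{C}$, and argue the two cannot be bisimilar because bisimilarity to an m-saturated model forces m-saturation. The paper packages that last step as a separate lemma (Lemma~\ref{m-bis}) and frames the argument by contradiction, whereas you inline the lemma and argue directly---and you are a bit more explicit about why the witness state $v$ is reachable from $w$---but the substance is identical.
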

\begin{proof}
The proof makes use of the next lemma, Lemma~\ref{m-bis}. See Appendix~\ref{app-biggesthmclass}.
\qed\end{proof}
\begin{lem}\label{m-bis}
If $M,s\bis N,t$ and $N,t$ is m-saturated, then $M,s$ is m-saturated too.
\end{lem}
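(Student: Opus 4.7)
The plan is to transfer the modal saturation of $N$ back to $M$ along the bisimulation $R$ witnessing $M,s \bis N,t$, using the back/forth clauses together with the bisimulation invariance theorem (Theorem~\ref{bisinv}). Since we are in the setting where models are generated by their designated points (as required in Theorem~\ref{biggesthmclass} where this lemma is invoked), I will first observe that every state $w \in M$ has at least one $R$-partner $w' \in N$; this follows by induction on the length of a path from $s$ to $w$, starting from the base case $sRt$ and repeatedly applying the forth clause of $R$.

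Next, fix any $w \in M$, a partner $w' \in N$ with $wRw'$, and a set of ML-formulas $\Sigma$ such that every finite subset of $\Sigma$ is satisfiable at some successor of $w$ in $M$. I would push this property across the bisimulation: for each finite $\Delta \subseteq \Sigma$, pick a successor $u_\Delta$ of $w$ with $u_\Delta \vDash \bigwedge \Delta$; the forth clause yields a successor $u'_\Delta$ of $w'$ in $N$ with $u_\Delta R u'_\Delta$, and Theorem~\ref{bisinv} gives $u'_\Delta \vDash \bigwedge \Delta$. So every finite subset of $\Sigma$ is satisfied at some successor of $w'$ in $N$. Since $N,t$ is m-saturated at every state, in particular at $w'$, there is a successor $v' \in N$ of $w'$ with $v' \vDash \Sigma$.

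Finally, I pull this witness back to $M$ via the back clause: there is a successor $v$ of $w$ in $M$ with $vRv'$. Applying Theorem~\ref{bisinv} formula-by-formula to each $\varphi \in \Sigma$, we obtain $v \vDash \Sigma$. Hence $\Sigma$ is realized at a successor of $w$ in $M$, which is exactly m-saturation of $M,s$ at $w$. Since $w$ was arbitrary, this finishes the argument.

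The only delicate point is the initial lemma that every $w \in M$ has an $R$-partner in $N$; everything else is a mechanical forth/back shuffle combined with invariance. This is unproblematic under the generated-submodel convention used for pointed models in this section, but if one wanted to drop that convention one would need to restrict attention to the $R$-reachable fragment of $M$, or start from a total bisimulation. The rest of the proof is essentially routine once the right partner $w'$ is in hand.
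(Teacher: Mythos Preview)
Your proof is correct and follows essentially the same approach as the paper: transfer finite satisfiability of $\Sigma$ from a point $w$ to its bisimilar partner $w'$ (you use the forth clause plus invariance on $\bigwedge\Delta$, the paper equivalently uses invariance of $\Diamond\bigwedge\Delta$), apply m-saturation of $N$ at $w'$, and pull the realizing successor back via the back clause and invariance. The only cosmetic differences are that the paper phrases it by contradiction and glosses over the existence of a partner $t'$ for $s'$, whereas you do it directly and explicitly justify partner-existence via the generated-submodel convention.
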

\begin{proof}
See Appendix~\ref{app-m-bis}.
\qed\end{proof}
\subsection{Class Bisimulation}\label{Cbis}
In this section, we prove the analogies of the results in the previous section w.r.t class bisimulation $\lftE{\bis}$. Corresponding results w.r.t $\lftA{\bis}$ can be also proved with little adaptation.

From Theorem~\ref{DimpE} and Theorem~\ref{bisinv}, we know that $\lftE{\bis}$ is the class invariance for \ML, namely for any classes of models $\Ci$ and $\Cii$: $$\Ci\lftE{\bis} \Cii\implies \Ci\asymp_{\ML} \Cii.$$ However the converse is not true even when we restrict to m-saturated models. According to Theorem~\ref{thm-compactC}, such counterexample must involve non-compact classes. For example, let $\Ci=\{M_1, M_2,\dots\}$ be the set of all the finite walks while $\Cii=\{M_0\}$ where $M_0$ is a single point with self-loop.
\begin{exm}\label{ex2}
$\Ci\lftE{\not\bis} \Cii$ but $\Ci\asymp_\ML \Cii.$\\ 
\medskip
\centerline{
$\xymatrix@R-5pt@C-5pt@M-4pt{\bullet\ar[d] & \bullet\ar[d] &\bullet\ar[d]  &\dots & & & \bullet\ar@(ul,ur) \\
\bullet & \bullet\ar[d] &\bullet\ar[d]  & \dots &  & & \\
& \bullet &\bullet\ar[d]  &\dots &  & & \\
&         &\bullet  & \dots&  & &\\
& &\Ci & & & & \Cii }$
}
\end{exm}

Note that $M_0$ is not bisimilar to any of the models in $\Ci$. However we can show that $\Ci\asymp_L \Cii$. Suppose not, then there is a formula $\varphi\in L$ such that $\Ci\vDash \neg \varphi $ and $\Cii\vDash \varphi$. Since $\varphi$ is satisfiable at the tree unraveling of $M_0$, it must be satisfiable at a finite submodel of it due to the finite depth property of \ML. However, any finite submodel of the unraveling of $M_0$ is just a finite walk. Thus there is a $k$ such that $M_k\vDash \varphi$. Therefore $\Ci\nvDash \neg \varphi$. Contradiction.

Like in the case of standard bisimulation, we want to identify the collections of model classes where class bisimulation coincides with the class equivalence relation of modal logic. Thus we need to lift the notion of m-saturated models.
\begin{defi}[Modally-saturated class]
A class $\mc{C}$ of pointed Kripke models is \textit{m-saturated} if each model in $\mc{C}$ is m-saturated and $\mc{C}$ is $\ML$-compact.
\end{defi}
The following is a corollary of Theorem~\ref{biscoi} and Theorem~\ref{thm-compactC}:
\begin{thm}\label{cbiscoin}
If $\Ci$ and $\Cii$ are m-saturated then $ \Ci\lftE{\bis}\Cii\iff \Ci\asymp_L \Cii$..
\end{thm}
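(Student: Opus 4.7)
The plan is to derive this as a direct instantiation of Theorem~\ref{thm-compactC}, taking $L=\ML$ and $\sim{}={}\bis$. What I need to verify is that the hypothesis of Theorem~\ref{thm-compactC} is satisfied in the relevant ambient class, and that the notion of m-saturated class as defined here packages exactly the $L$-compactness requirement together with what is needed to make $\bis$ coincide with $\equiv_\ML$ on the models being considered.

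First, I would restrict attention to the ambient class $\mc{M}=\mc{C}^m$ of m-saturated pointed Kripke models; since $\Ci$ and $\Cii$ are m-saturated classes, by definition every model in $\Ci\cup\Cii$ lies in $\mc{C}^m$, so it is harmless to take $\mc{C}^m$ as the underlying universe. By Theorem~\ref{biscoi}, the Hennessy--Milner property holds on $\mc{C}^m$, i.e.\ for all $M,s,N,t\in\mc{C}^m$,
\[
M,s\bis N,t \iff M,s\equiv_\ML N,t.
\]
This is precisely the global hypothesis $\forall M,N\in\mc{M}\colon M\sim N\iff M\equiv_L N$ required by Theorem~\ref{thm-compactC}, with the chosen instantiations.

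Next, the definition of an m-saturated class explicitly includes that $\Ci$ and $\Cii$ are $\ML$-compact. Since $\ML$ contains binary conjunction (as globally assumed in Section~\ref{secGen}), both prerequisites of Theorem~\ref{thm-compactC} are in place. Applying part~(2) of that theorem with $L=\ML$ and $\sim{}={}\bis$ yields
\[
\Ci\lftE{\bis}\Cii \iff \Ci\asymp_\ML \Cii,
\]
which is exactly the claim (using that, in this section, $L$ abbreviates $\ML$).

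There is no real obstacle here; the only point that deserves a sentence of justification is why we may read Theorem~\ref{thm-compactC} with $\mc{M}=\mc{C}^m$. The mildest worry would be that $\asymp_\ML$ and $\lftE{\bis}$ might depend on which ambient class one works in, but both notions are defined purely in terms of the models actually in $\Ci$ and $\Cii$ (via satisfaction of $\ML$-formulas and existence of a bisimulation between two models in the two classes), so the choice of surrounding universe is immaterial. Hence the theorem follows as a one-line corollary once the two ingredients are lined up.
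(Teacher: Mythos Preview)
Your proposal is correct and follows exactly the paper's approach: the paper states the result as a direct corollary of Theorem~\ref{biscoi} and Theorem~\ref{thm-compactC}, which is precisely the instantiation you carry out. Your extra care in noting that the global hypothesis of Theorem~\ref{thm-compactC} only holds on $\mc{C}^m$ (not on arbitrary Kripke models), and that $\asymp_\ML$ and $\lftE{\bis}$ are insensitive to the ambient universe, is a legitimate point the paper leaves implicit.
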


The \textit{class-bisimulation-somewhere-else} result corresponding to Theorem~\ref{bissomewhereelse} requires the following generalization of ultrafilter extensions which aims to make a class m-saturated.
\begin{defi}[Ultrafilter extension of classes]
Given a class $\mc{C}$ of pointed models, let $S_{\mc{C}}$ be the set of designated points of the models in $\mc{C}$. We define the  \textit{ultrafilter extension} of a class $\mc{C} $(written $\ue(\mc{C}))$ as a new class of models as follows:  $$\ue(\mc{C})=\{\ue(\biguplus\mc{C}),u\mid u\in U_\mc{C}\}$$ 
where $U_{\mc{C}}=\{u\mid u\in\ue(\biguplus\mc{C})$ and $\forall X\in u: X\cap S_{\mc{C}}\not=\emptyset \}$, and $\biguplus\mc{C}$ the disjoint union of models in $\mc{C}$ (as unpointed models). Then $\ue(\{M,s\})=\{\ue(M),\pi_s\}$.
\end{defi}
Note that our purpose is to make a class m-saturated by a operation on class of models, like ultrafilter extension does on single model. This means we need to include many auxiliary models to make the class \ML-compact. It is not hard to see that the pointed models in $\ue(\mc{C})$ are defined differently than the \textit{Ultraunions} in \cite{UUpuremodal}, where $\ue(\biguplus\mc{C}),u$ is called an \textit{ultraunion} of $\mc{C}$ iff $u\in U_{\mc{C}}'=\{u\mid  u\in \ue(\biguplus\mc{C}) \textrm{ and $u$ contains all the cofinite subsets of $S_\mc{C}$}\}$. First of all, $U_{\mc{C}}'$ does not contain principal ultrafilters generated by $s\in S_{\mc{C}}$. Moreover, when we restrict ourselves to non-principal ultrafilters of $\biguplus\mc{C}$, $U_{\mc{C}}'$ is still a subset of $U_{\mc{C}}$. To see this we can prove:
\begin{prop}\label{prop-diff}
If $S_{\mc{C}}$ is infinite, $U_{\mc{C}}'\subseteq U_{\mc{C}}|_{np}$ where $$U_{\mc{C}}|_{np}=\{u\mid u\in U_\mc{C} \textrm{ and $u\not=\pi_s$ for any  $s\in S_\mc{C}$} \}.$$
Moreover if $\biguplus\mc{C}\backslash S_{\mc{C}}$ is infinite, then $U_{\mc{C}}'\subset U_{\mc{C}}|_{np}.$
\end{prop}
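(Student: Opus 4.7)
The plan is to verify the two claimed inclusions by unpacking the defining conditions of $U_{\mc{C}}'$ and $U_{\mc{C}}|_{np}$ and invoking only basic ultrafilter facts (closure under finite intersection; the equivalence of non-principality with containing no finite set).

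For the inclusion $U_{\mc{C}}'\subseteq U_{\mc{C}}|_{np}$, I would take an arbitrary $u\in U_{\mc{C}}'$ and establish both (a)~$u\in U_{\mc{C}}$ and (b)~$u\neq\pi_s$ for every $s\in S_{\mc{C}}$. Part~(a) is almost immediate: $S_{\mc{C}}$ is vacuously a cofinite subset of itself, so $S_{\mc{C}}\in u$, and closure of $u$ under finite intersections then gives $X\cap S_{\mc{C}}\in u$ for every $X\in u$, in particular $X\cap S_{\mc{C}}\neq\emptyset$. Part~(b) leverages the infinitude of $S_{\mc{C}}$: for every $s\in S_{\mc{C}}$ the set $S_{\mc{C}}\setminus\{s\}$ is cofinite in $S_{\mc{C}}$ and therefore lies in $u$, but $s\notin S_{\mc{C}}\setminus\{s\}$, so $u$ cannot equal $\pi_s$.

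For the strict inclusion under the stronger hypothesis that $W:=\biguplus\mc{C}\setminus S_{\mc{C}}$ is infinite, my plan is to exhibit a witness $u^*\in U_{\mc{C}}|_{np}\setminus U_{\mc{C}}'$ by a Zorn-style construction. Concretely, I would pick a finite $F_0\subseteq S_{\mc{C}}$, take the filter base on $\biguplus\mc{C}$ generated by $S_{\mc{C}}$, by the complement $(W\cup F_0)$ of the intended missing cofinite subset $S_{\mc{C}}\setminus F_0$, and by all cofinite subsets of $\biguplus\mc{C}$ (to enforce non-principality), and extend it to an ultrafilter $u^*$. The infinitude of $W$ is what keeps this filter base proper enough to be extendable and at the same time keeps the candidate $u^*$ genuinely outside the cofinite filter on $S_{\mc{C}}$.

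The main obstacle is the tension between the three constraints on $u^*$: belonging to $U_{\mc{C}}$ forces $S_{\mc{C}}\in u^*$; missing a cofinite subset of $S_{\mc{C}}$ forces some $(W\cup F_0)\in u^*$; and non-principality forbids any finite set from being in $u^*$. The naive filter base collapses because intersecting $S_{\mc{C}}$ with $(W\cup F_0)$ yields the finite set $F_0$, which would make $u^*$ principal. Resolving this is the technical heart of the argument: one must choose the witnessing sets so that the restriction to $S_{\mc{C}}$ stays infinite on both the $u^*$-side and its complement, using the infinitude of $W$ to absorb the potentially offending finite trace in the non-designated part of $\biguplus\mc{C}$. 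Once such a filter base is engineered, the rest of the proof is a routine application of Zorn's lemma.
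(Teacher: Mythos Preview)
Your argument for the inclusion $U_{\mc{C}}'\subseteq U_{\mc{C}}|_{np}$ is correct and matches the paper's: both rely on $S_{\mc{C}}$ being cofinite in itself to obtain $S_{\mc{C}}\in u$ (hence $u\in U_{\mc{C}}$), and on the cofiniteness of $S_{\mc{C}}\setminus\{s\}$ in $S_{\mc{C}}$ to exclude principality.

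For the strict inclusion, however, the obstruction you identified is not a technical wrinkle to be engineered around --- it is fatal to the claim. Any $u\in U_{\mc{C}}|_{np}$ satisfies $S_{\mc{C}}\in u$ (the condition $\forall X\in u:\,X\cap S_{\mc{C}}\neq\emptyset$ forces $\biguplus\mc{C}\setminus S_{\mc{C}}\notin u$) and, being non-principal, contains no finite set. If some $A\subseteq S_{\mc{C}}$ cofinite in $S_{\mc{C}}$ were absent from $u$, then $\biguplus\mc{C}\setminus A\in u$, whence $S_{\mc{C}}\cap(\biguplus\mc{C}\setminus A)=S_{\mc{C}}\setminus A\in u$; but this set is finite, contradicting non-principality. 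Thus every $u\in U_{\mc{C}}|_{np}$ already contains all cofinite subsets of $S_{\mc{C}}$, i.e.\ $U_{\mc{C}}|_{np}\subseteq U_{\mc{C}}'$, and together with the first part one gets $U_{\mc{C}}'=U_{\mc{C}}|_{np}$ rather than strict inclusion. Your instinct that the filter base ``collapses'' was exactly right; the promised resolution cannot exist, because the infinitude of $W=\biguplus\mc{C}\setminus S_{\mc{C}}$ is irrelevant once $S_{\mc{C}}\in u$ is forced.

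The paper's own proof of the strict inclusion is likewise defective: its proposed witness $u^*=\{X\mid X\text{ is cofinite in }\biguplus\mc{C}\}$ is the Fr\'echet filter, not an ultrafilter, so it is not an element of $\ue(\biguplus\mc{C})$ at all; and any ultrafilter extending it that lies in $U_{\mc{C}}$ must contain $S_{\mc{C}}$ and hence, by the argument above, every cofinite subset of $S_{\mc{C}}$.
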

\begin{proof}
See Appendix~\ref{app-prop-diff}.
\qed\end{proof}
Now we want to show an analogy of Theorem~\ref{bissomewhereelse}:
\begin{thm}[Class Bisimulation somewhere else]\label{cbissomewhereelse} For all classes of pointed Kripke models  $\Ci$ and $\Cii$:
$\Ci\asymp_{\ML}\Cii\iff \ue(\Ci)\lftE{\bis} \ue(\Cii).$

\end{thm}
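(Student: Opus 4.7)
The strategy is to reduce the statement to Theorem~\ref{cbiscoin} by proving two intermediate facts: (A) the class $\ue(\mc{C})$ is m-saturated for every class $\mc{C}$ of pointed Kripke models; and (B) $\mc{C}\equivC{\ML}\ue(\mc{C})$. Granted these, the theorem follows from the chain $\Ci\asymp_{\ML}\Cii \iff \ue(\Ci)\asymp_{\ML}\ue(\Cii) \iff \ue(\Ci)\lftE{\bis}\ue(\Cii)$: the first step holds because $\asymp_{\ML}$ depends only on which $\ML$-formulas a class validates, and the second is Theorem~\ref{cbiscoin} applied to the m-saturated classes $\ue(\Ci)$ and $\ue(\Cii)$.

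For (B), the workhorse is the standard identity $\ue(\biguplus\mc{C}),u\vDash\varphi \iff \llbracket\varphi\rrbracket\in u$, where $\llbracket\varphi\rrbracket:=\{x\in\biguplus\mc{C}\mid \biguplus\mc{C},x\vDash\varphi\}$. Since a disjoint union preserves the modal theory of each point (the generated submodel of $\biguplus\mc{C}$ from $s\in M$ is isomorphic to the one in $M$, hence bisimilar, hence modally equivalent by Theorem~\ref{bisinv}), we get $S_\mc{C}\subseteq\llbracket\varphi\rrbracket$ iff $\mc{C}\vDash\varphi$. Moreover every $u\in U_\mc{C}$ contains $S_\mc{C}$: otherwise $\biguplus\mc{C}\setminus S_\mc{C}\in u$ by maximality, contradicting the defining condition of $U_\mc{C}$. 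Combining these: if $\mc{C}\vDash\varphi$ then $S_\mc{C}\in u$ and $S_\mc{C}\subseteq\llbracket\varphi\rrbracket$ force $\llbracket\varphi\rrbracket\in u$, hence $\ue(\mc{C})\vDash\varphi$; conversely each $\pi_s$ with $s\in S_\mc{C}$ lies in $U_\mc{C}$, so $\ue(\mc{C})\vDash\varphi$ implies $\llbracket\varphi\rrbracket\in\pi_s$, i.e.\ $s\in\llbracket\varphi\rrbracket$, whence $\mc{C}\vDash\varphi$.

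For (A), each pointed model in $\ue(\mc{C})$ is a point of $\ue(\biguplus\mc{C})$, which is an m-saturated Kripke model by Theorem~\ref{ueequiv} (m-saturation is a property of the underlying model, independent of the designated point). The substantive part is $\ML$-compactness: assume every finite $\Sigma'\subseteq\Sigma$ is satisfied in $\ue(\mc{C})$ by some $u_{\Sigma'}\in U_\mc{C}$. Then $S_\mc{C}\cap\bigcap_{\psi\in\Sigma'}\llbracket\psi\rrbracket\in u_{\Sigma'}$ is nonempty, so the family $\{S_\mc{C}\}\cup\{\llbracket\psi\rrbracket\mid\psi\in\Sigma\}$ has the finite intersection property. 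Extending it to an ultrafilter $u^*$ on $\biguplus\mc{C}$ yields $u^*\in U_\mc{C}$ (because $S_\mc{C}\in u^*$), and by the identity from (B), $\ue(\biguplus\mc{C}),u^*\vDash\Sigma$.

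The main obstacle is the $\ML$-compactness claim in (A); this is exactly what motivates defining $U_\mc{C}$ as \emph{all} ultrafilters meeting $S_\mc{C}$ rather than only the principal ones $\pi_s$. The non-principal ultrafilters are needed to absorb limit modal theories that would otherwise break compactness, while the $S_\mc{C}$-meeting condition keeps the class modally faithful to $\mc{C}$ as used in (B). The remaining bookkeeping is direct appeal to the ultrafilter-extension facts already recorded in Section~\ref{secbis}.
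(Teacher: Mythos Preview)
Your proposal is correct and follows essentially the same route as the paper: your facts (A) and (B) are precisely the paper's Lemmas~\ref{cuems} and~\ref{cueequiv}, and the concluding chain through Theorem~\ref{cbiscoin} is exactly how the paper assembles the proof. Your observation that $U_{\mc{C}}$ consists exactly of the ultrafilters containing $S_{\mc{C}}$ streamlines the compactness argument slightly compared with the paper's formulation via the sets $V'_{\biguplus\mc{C}}(\Delta)$, but the underlying idea (finite intersection property plus the Ultrafilter Theorem) is identical.
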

To prove this theorem, we need Theorem~\ref{cbiscoin} and the following two lemmas.

\begin{lem}\label{cueequiv}
For any class of pointed Kripke models $\mc{C}$, any $\varphi\in\ML:$
$$\mc{C}\vDash\varphi\iff \ue(\mc{C})\vDash \varphi$$
\end{lem}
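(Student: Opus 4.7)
The plan is to reduce the statement to the classical behaviour of ultrafilter extensions on single models (Theorem~\ref{uethem}), exploiting two features of the construction: first, that modal truth at any point $s \in S_{\mc{C}}$ is preserved between the component of $\mc{C}$ containing $s$ and the disjoint union $\biguplus \mc{C}$; second, that the set $U_{\mc{C}}$ of admissible ultrafilters is by definition tied to $S_{\mc{C}}$. The main lemma I will rely on (a standard fact about ultrafilter extensions, essentially the J{\'o}nsson--Tarski truth lemma) is that for every modal formula $\varphi$ and every ultrafilter $u$ over a model $M$,
\[
\ue(M), u \vDash \varphi \iff \{w \in M \mid M, w \vDash \varphi\} \in u.
\]
Applying this to $M = \biguplus \mc{C}$ converts the semantic question on $\ue(\mc{C})$ into a set-theoretic question about which subsets of $\biguplus \mc{C}$ belong to each $u \in U_{\mc{C}}$.

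For the direction $\mc{C}\vDash\varphi \Rightarrow \ue(\mc{C})\vDash\varphi$, I would argue as follows. Assuming $M,s\vDash\varphi$ for every $(M,s)\in\mc{C}$, invariance of modal truth under disjoint unions yields $\biguplus\mc{C}, s\vDash\varphi$ for every $s\in S_{\mc{C}}$, so $S_{\mc{C}} \subseteq [\![\varphi]\!] := \{w\in\biguplus\mc{C} \mid \biguplus\mc{C}, w\vDash\varphi\}$. Hence the complement $(\biguplus\mc{C})\setminus [\![\varphi]\!]$ is disjoint from $S_{\mc{C}}$, so by the defining clause of $U_{\mc{C}}$ it cannot belong to any $u\in U_{\mc{C}}$; since $u$ is an ultrafilter this forces $[\![\varphi]\!]\in u$, and the key lemma above delivers $\ue(\biguplus\mc{C}), u \vDash \varphi$ for every such $u$, i.e.\ $\ue(\mc{C})\vDash\varphi$.

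For the converse, I would use the principal ultrafilters. For any $s\in S_{\mc{C}}$ the principal ultrafilter $\pi_s$ over $\biguplus\mc{C}$ lies in $U_{\mc{C}}$, because every $X\in\pi_s$ contains $s$ and thus meets $S_{\mc{C}}$. So if $\ue(\mc{C})\vDash\varphi$ then in particular $\ue(\biguplus\mc{C}),\pi_s\vDash\varphi$ for each such $s$. Theorem~\ref{uethem} then gives $\biguplus\mc{C}, s\vDash\varphi$, and invariance under disjoint unions again yields $M, s\vDash\varphi$ for the unique $(M,s)\in\mc{C}$ containing $s$. Thus $\mc{C}\vDash\varphi$.

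The main obstacle, if anything, is conceptual rather than technical: one must be careful about the domain of the ultrafilters (over $\biguplus\mc{C}$, not over individual $M$), and verify that the set-theoretic restriction imposed by $U_{\mc{C}}$ is exactly strong enough to force every formula true throughout $\mc{C}$ to belong to every admissible ultrafilter, while being weak enough to still include all principal ultrafilters based at $S_{\mc{C}}$. The rest is a clean application of the J{\'o}nsson--Tarski truth lemma for ultrafilter extensions together with invariance of $\ML$ under disjoint unions.
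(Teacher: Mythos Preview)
Your proposal is correct and follows essentially the same route as the paper's proof: for $\Rightarrow$ you use $S_{\mc{C}}\subseteq\{w\mid\biguplus\mc{C},w\vDash\varphi\}$ together with the defining clause of $U_{\mc{C}}$ and the ultrafilter-extension truth lemma to force $\ue(\biguplus\mc{C}),u\vDash\varphi$ for every $u\in U_{\mc{C}}$, exactly as the paper does; for $\Leftarrow$ you observe that each principal ultrafilter $\pi_s$ with $s\in S_{\mc{C}}$ lies in $U_{\mc{C}}$ and then invoke Theorem~\ref{uethem} plus invariance under disjoint unions, again matching the paper. The only cosmetic difference is that you name the J{\'o}nsson--Tarski truth lemma explicitly, whereas the paper appeals to it implicitly by saying ``it amounts to prove $V_{\biguplus\mc{C}}(\varphi)\in u$''.
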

\begin{proof}
$\Leftarrow:$ Note that for all $s\in S_{\mc{C}}$: $\ue(\biguplus\mc{C}),\pi_s\in \ue(\mc{C})$ where $\pi_s$ is the principal ultrafilter generated by $s$. Assume $\ue(\mc{C})\vDash \varphi$ then for all $s\in S_{\mc{C}}$: $\ue(\biguplus\mc{C}),\pi_s\vDash\varphi.$ Thus from Theorem~\ref{ueequiv} and the preservation result for disjoint union, $\mc{C}\vDash\varphi$.

$\Rightarrow:$ Suppose $\mc{C}\vDash\varphi$, then $S_\mc{C}\subseteq V_{\biguplus\mc{C}}(\varphi)$ where $V_{\biguplus\mc{C}}(\varphi)=\{w\mid  \biguplus\mc{C},w\vDash \varphi\}$. Now we need to show that for all $u\in U_{\mc{C}}$: $\ue(\biguplus\mc{C}),u\vDash \varphi.$ Since $\ue(\biguplus\mc{C})$ is the ultrafilter extension of $\biguplus\mc{C}$, it amounts to prove for every $u\in U_{\mc{C}}$: $$ V_{\biguplus\mc{C}}(\varphi)\in u $$ Suppose not, then $\overline{V_{\biguplus\mc{C}}(\varphi)}\in u$, due to the fact that $u$ is an ultrafilter. Since $S_\mc{C}\subseteq V_{\biguplus\mc{C}}(\varphi)$, $\overline{V_{\biguplus\mc{C}}(\varphi)}\cap S_{\mc{C}}=\emptyset$, in contradiction to the assumption that $u\in U_{\mc{C}}.$
\qed\end{proof}

\begin{lem}\label{cuems}
For any class $\mc{C}$ of pointed Kripke models, $\ue(\mc{C})$ is m-saturated.
\end{lem}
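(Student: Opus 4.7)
The plan is to verify the two clauses of the definition of m-saturated class separately: (i) every pointed model in $\ue(\mc{C})$ is m-saturated, and (ii) the class $\ue(\mc{C})$ is $\ML$-compact.

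For (i), I would first observe that every pointed model in $\ue(\mc{C})$ shares the same underlying Kripke model $\ue(\biguplus\mc{C})$; only the designated ultrafilter varies. The m-saturation condition as defined quantifies over every world $w$ of the model, so it depends only on the underlying frame, not on the designated point. It therefore suffices to show that $\ue(\biguplus\mc{C})$ is m-saturated as an unpointed structure, and this is immediate from Theorem~\ref{ueequiv} applied to $\biguplus\mc{C}$ with any choice of designated point (the standard ultrafilter-extension construction produces a model that is m-saturated at every world).

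For (ii), the key preliminary observation is that $u \in U_\mc{C}$ iff $S_\mc{C} \in u$: from $S_\mc{C}\in u$, closure under finite intersection gives $X\cap S_\mc{C}\in u$ for all $X\in u$, so $X\cap S_\mc{C}\neq\emptyset$; conversely, if $S_\mc{C}\notin u$ then $\overline{S_\mc{C}}\in u$ by the ultrafilter axiom, contradicting $\overline{S_\mc{C}}\cap S_\mc{C}=\emptyset$. Now, let $\Sigma\subseteq\ML$ be such that every finite $\Delta\subseteq\Sigma$ is satisfiable in $\ue(\mc{C})$; that is, some $u_\Delta\in U_\mc{C}$ witnesses $\ue(\biguplus\mc{C}),u_\Delta\vDash\bigwedge\Delta$. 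By the standard ultrafilter-extension semantics this gives $V_{\biguplus\mc{C}}(\bigwedge\Delta)\in u_\Delta$, and combined with $S_\mc{C}\in u_\Delta$ we obtain that $S_\mc{C}\cap V_{\biguplus\mc{C}}(\bigwedge\Delta)\neq\emptyset$. Hence the family $\{S_\mc{C}\}\cup\{V_{\biguplus\mc{C}}(\varphi)\mid\varphi\in\Sigma\}$ enjoys the finite intersection property, and by the ultrafilter theorem extends to an ultrafilter $u^*$ on the carrier of $\biguplus\mc{C}$. Then $u^*\in U_\mc{C}$ (because $S_\mc{C}\in u^*$) and $V_{\biguplus\mc{C}}(\varphi)\in u^*$ for all $\varphi\in\Sigma$, so $\ue(\biguplus\mc{C}),u^*\vDash\Sigma$, exhibiting a pointed model of $\ue(\mc{C})$ that satisfies $\Sigma$.

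The main subtlety I expect lies in the compactness half (ii), specifically in translating between the pointed-models-in-$\ue(\mc{C})$ view and the finite-intersection-property view on subsets of $\biguplus\mc{C}$. The equivalence $u\in U_\mc{C}\iff S_\mc{C}\in u$ is the crucial bridge: it turns the membership constraint defining $U_\mc{C}$ into a single set $S_\mc{C}$ to throw into a filter base, and lets us lift any FIP-family containing $S_\mc{C}$ back to a designated point of $\ue(\mc{C})$. Once this is in place, both (i) and (ii) reduce to essentially routine facts about ultrafilter extensions and the ultrafilter theorem.
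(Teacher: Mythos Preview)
Your proof is correct and follows the same overall strategy as the paper: part (i) is dispatched by the fact that ultrafilter extensions are m-saturated, and part (ii) is obtained by building a witnessing ultrafilter via the finite intersection property. The packaging differs in one respect worth noting. The paper first invokes Lemma~\ref{cueequiv} to push finite satisfiability in $\ue(\mc{C})$ down to finite satisfiability at the designated points $\{\biguplus\mc{C},s\mid s\in S_\mc{C}\}$, and then forms the FIP family $\{V'_{\biguplus\mc{C}}(\Delta)\mid\Delta\subseteq_\textrm{fin}\Sigma\}$ of truth sets restricted to $S_\mc{C}$; membership of the resulting ultrafilter $u^*$ in $U_\mc{C}$ is argued at the end from the fact that every set in this family lies inside $S_\mc{C}$. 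You instead isolate the equivalence $u\in U_\mc{C}\iff S_\mc{C}\in u$ up front, which lets you throw $S_\mc{C}$ itself into the filter base $\{S_\mc{C}\}\cup\{V_{\biguplus\mc{C}}(\varphi)\mid\varphi\in\Sigma\}$ and read off both $u^*\in U_\mc{C}$ and $u^*\vDash\Sigma$ immediately, bypassing Lemma~\ref{cueequiv} altogether. Your route is slightly more direct; the paper's route has the side benefit of making explicit that finite satisfiability in $\ue(\mc{C})$ already entails finite satisfiability at the original designated points.
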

\begin{proof} Since all the models in $\ue(\mc{C})$ are ultrafilter extensions thus m-saturated, we only need to show that $\ue(\mc{C})$ is \ML-compact.  
Given a class $\mc{C}$ of pointed Kripke models, suppose every finite subset of $\Sigma$ is satisfiable in $\ue(\mc{C})$. We claim that:
$$ \textrm{every finite subset of $\Sigma$ is satisfiable in $\{\biguplus\mc{C},s\mid s\in S_{\mc{C}}\}$ }$$
Suppose not, then there is a finite $\Delta\subseteq \Sigma$ such that $\Delta$ is not satisfiable in $\{\biguplus\mc{C},s\mid s\in S_{\mc{C}}\}$. Thus for any $s\in S_{\mc{C}}: \biguplus\mc{C},s\vDash\neg\bigwedge\Delta$. Therefore from the preservation property of the disjoint union, for any pointed model $M,s\in \mc{C}:M,s\vDash\neg\bigwedge\Delta$ and then $\mc{C}\vDash\neg \bigwedge\Delta.$ From Lemma~\ref{cueequiv}, $\ue(\mc{C})\vDash\neg\bigwedge\Delta$ contradictory to the assumption that every finite subset of $\Sigma$ is satisfiable in $\ue(\mc{C})$.

Now let $$F_{\mc{C}}(\Sigma)=\{V'_{\biguplus\mc{C}}(\Delta)\mid \Delta \textrm{ is a finite subset of }\Sigma\}$$ where $V'_{\biguplus\mc{C}}=\{s\mid s\in S_{\mc{C}}\textrm{ and } \biguplus\mc{C},s\vDash\Delta\}.$ It is not hard to see that $F_{\mc{C}}(\Sigma)$ has the finite intersection property since:
\begin{enumerate}
\item for any finite subsets $\Delta$ and $\Delta'$ of $\Sigma:$
$V'_{\biguplus\mc{C}}(\Delta)\cap V'_{\biguplus\mc{C}}(\Delta')=V'_{\biguplus\mc{C}}(\Delta\cup\Delta')$
\item $ V'_{\biguplus\mc{C}}(\Delta\cup\Delta')\not=\emptyset \textrm{ according to the above claim}.$
\end{enumerate}
From Ultrafilter Theorem cf.~\cite{mlbook}, $F_{\mc{C}}(\Sigma)$ can be extended to an ultrafilter $u^*$. Now for any $\varphi\in\Sigma$, clearly $V_{\biguplus\mc{C}}(\varphi)\in u^*$ since $V'_{\biguplus\mc{C}}(\varphi)\subseteq V_{\biguplus\mc{C}}(\varphi)$ and $V'_{\biguplus\mc{C}}(\varphi)\in u^*$. Thus from the construction of ultrafilter extension, $\ue(\biguplus\mc{C}),u^*\vDash\varphi.$ Therefore $\ue(\biguplus\mc{C}),u^*\vDash\Sigma.$

We now only need to prove that $u^*\in U_{\mc{C}}$, namely for all $X\in u^*: X\cap S_{\mc{C}}\not=\emptyset.$ Suppose not, then there is an $X$ such that $X\cap S_{\mc{C}}=\emptyset$. However, since $F_{\mc{C}}(\Sigma)\subseteq S_{\mc{C}}$, there is an $Y\in u^*:X\cap Y=\emptyset$. Then $u^*$ is not an ultrafilter, contradiction.
\qed\end{proof}

The previous two lemmas complete the proof for Theorem~\ref{cbissomewhereelse}. 

As a corollary of Theorem~\ref{cbissomewhereelse} we can give a class-bisimulation characterization of the modally-definable classes:
\begin{cor}\label{cordef}
A class of pointed model $\mc{C}$ is \textbf{not} definable in $\ML$ iff $\ue(\mc{C})\lftE{\bis}\ue(\overline{\mc{C}}).$
\end{cor}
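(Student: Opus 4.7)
The plan is to derive this corollary directly from Theorem~\ref{cbissomewhereelse} (class-bisimulation somewhere else) together with the basic observation, already used in the proof of Theorem~\ref{Expr=ClassDist}, that a class is definable exactly when it is separable from its complement. No new machinery should be required.

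First I would unpack the definability side. A class $\mc{C}\subseteq\M$ is definable in $\ML$ iff there is some $\varphi\in\ML$ with $\{M\mid M\vDash\varphi\}=\mc{C}$, which (since $\ML$ contains negation) is equivalent to $\sepL{\mc{C}}{\ML}{\overline{\mc{C}}}$: any defining formula $\varphi$ witnesses $\mc{C}\vDash\varphi$ and $\overline{\mc{C}}\vDash\neg\varphi$, and conversely a separating formula defines either $\mc{C}$ or $\overline{\mc{C}}$, hence in both cases $\mc{C}$. Recalling that $\Ci\not\asymp_L\Cii\iff\sepL{\Ci}{L}{\Cii}$, this gives:
\[
\mc{C}\text{ is definable in }\ML\iff \mc{C}\not\asymp_{\ML}\overline{\mc{C}}.
\]
Taking contrapositives: $\mc{C}$ is \emph{not} definable in $\ML$ iff $\mc{C}\asymp_{\ML}\overline{\mc{C}}$.

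Then I would plug this into Theorem~\ref{cbissomewhereelse}, which states exactly $\Ci\asymp_{\ML}\Cii\iff \ue(\Ci)\lftE{\bis}\ue(\Cii)$. Setting $\Ci=\mc{C}$ and $\Cii=\overline{\mc{C}}$ yields $\mc{C}\asymp_{\ML}\overline{\mc{C}}\iff\ue(\mc{C})\lftE{\bis}\ue(\overline{\mc{C}})$, and combining with the previous equivalence proves the corollary.

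There is no real obstacle here: the whole content of the corollary has already been absorbed into Theorem~\ref{cbissomewhereelse} (whose proof via Lemmas~\ref{cueequiv} and~\ref{cuems} carries the genuine work, namely that $\ue(\mc{C})$ is m-saturated and modally equivalent to $\mc{C}$). The only small care is to note that the equivalence ``definable iff separable from complement'' needs negation, which is indeed part of $\ML$; this is also the same observation underlying Theorem~\ref{Expr=ClassDist}, so it is entirely standard in our setting.
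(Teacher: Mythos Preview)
Your argument is correct and matches the paper's own proof essentially verbatim: the paper also reduces the corollary to Theorem~\ref{cbissomewhereelse} via the observation that $\mc{C}$ is not definable by a single $\ML$ formula iff $\mc{C}\asymp_{\ML}\overline{\mc{C}}$. You simply spell out that last equivalence more carefully than the paper does.
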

\begin{proof}
Follows from Theorem~\ref{cbissomewhereelse}.
Note that $\mc{C}$ is not definable by a single \ML\ formula iff $\mc{C}\asymp_{\ML}\overline{\mc{C}}$. 
\qed\end{proof}
It follows that:
\begin{cor}
A class of pointed Kripke model $\mc{C}$ is definable in $\ML$ iff $\mc{C}$ and $\overline{\mc{C}}$ are closed under bisimulation and $\ue$(namely $\ue(\mc{C})\subseteq \mc{C}$).
\end{cor}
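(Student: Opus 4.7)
The plan is to derive this characterization almost immediately from Corollary~\ref{cordef}, using Lemma~\ref{cueequiv} together with the bisimulation invariance of $\ML$ (Theorem~\ref{bisinv}).

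For the left-to-right direction, suppose $\mc{C}$ is defined by some $\varphi\in\ML$, so that $\overline{\mc{C}}$ is defined by $\neg\varphi$. Closure under bisimulation for both $\mc{C}$ and $\overline{\mc{C}}$ is immediate from Theorem~\ref{bisinv}: any bisimilar partner of a model satisfying $\varphi$ (resp.\ $\neg\varphi$) satisfies $\varphi$ (resp.\ $\neg\varphi$). Closure under $\ue$ follows from Lemma~\ref{cueequiv}: since $\mc{C}\vDash\varphi$, also $\ue(\mc{C})\vDash\varphi$, so every model in $\ue(\mc{C})$ lies in $\mc{C}$, i.e.\ $\ue(\mc{C})\subseteq\mc{C}$; the same argument for $\neg\varphi$ yields $\ue(\overline{\mc{C}})\subseteq\overline{\mc{C}}$.

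For the right-to-left direction, I will argue by contraposition. Suppose $\mc{C}$ is not definable in $\ML$. By Corollary~\ref{cordef}, $\ue(\mc{C})\lftE{\bis}\ue(\overline{\mc{C}})$, so by the definition of $\lftE{\bis}$ there exist pointed models $M\in\ue(\mc{C})$ and $N\in\ue(\overline{\mc{C}})$ with $M\bis N$. If $\mc{C}$ and $\overline{\mc{C}}$ were both closed under $\ue$, then $M\in\mc{C}$ and $N\in\overline{\mc{C}}$; applying closure of $\mc{C}$ under bisimulation to $M\bis N$ forces $N\in\mc{C}$, contradicting $N\in\overline{\mc{C}}$. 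Hence at least one of the four closure conditions fails.

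The only subtle point is the reading of \emph{closed under $\ue$} at the level of classes: the parenthetical in the statement fixes this as $\ue(\mc{C})\subseteq\mc{C}$, which is exactly what both directions above use. I expect no real obstacle, since Corollary~\ref{cordef} already bears most of the weight; the present statement is essentially its unpacking in terms of the natural closure conditions on $\mc{C}$ and $\overline{\mc{C}}$.
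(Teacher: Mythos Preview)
Your proposal is correct and follows essentially the same route as the paper: both directions hinge on Corollary~\ref{cordef}, with the contrapositive for $\Leftarrow$ unpacking $\ue(\mc{C})\lftE{\bis}\ue(\overline{\mc{C}})$ into a pair of bisimilar models that violate one of the closure conditions. Your treatment of $\Rightarrow$ is slightly more explicit than the paper's (which just calls it ``straightforward''), invoking Lemma~\ref{cueequiv} and Theorem~\ref{bisinv} directly, but that is exactly what the paper intends.
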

\begin{proof}
$\Rightarrow$ is straightforward. We now prove $\Leftarrow$. Note that from Corollary~\ref{cordef}, we only need to show
$$\textrm{ $\mc{C}$ and $\overline{\mc{C}}$ are closed under bisimulation and $\ue$ implies $\ue(\mc{C})\lftE{\not\bis}\ue(\overline{\mc{C}})$}$$
Suppose $\ue(\mc{C})\lftE{\bis}\ue(\overline{\mc{C}})$, namely there are $M,s\in\ue(\mc{C})$ and $N,t\in\ue(\overline{\mc{C}})$ such that $M,s\bis N,t$. Now suppose $\mc{C}$ and $\overline{\mc{C}}$ are closed under $\ue$, then $M,s\in\mc{C}$ and $N,t\in\overline{\mc{C}}$. Since $M,s\bis N,t$ then $\mc{C}$ is not closed under bisimulation.
\qed\end{proof}

We have to be careful if we want to obtain the analogy of Theorem~\ref{MLinf}. A natural question to ask is whether we can get rid of class-size conjunction and disjunctions in Theorem~\ref{Linf}, but only use \MLinf\ that is the modal logic with conjunctions over arbitrarily large sets. Unfortunately, we observe that the following is not true:
$$\Ci\asymp_{\MLinf}\Cii\iff \Ci\bis^{E}_{\mc{C}} \Cii \quad (\divideontimes)$$

Consider an example from \cite{ModelModalHML} (where it is used to show that \MLinf\ can not define well-foundedness). For each ordinal $\alpha$, let $M_\alpha=\{\{\beta \mid \beta\leq \alpha\},R\}$ be its reverse, i.e.\ $R=\{(\beta,\beta')\mid \beta'<\beta\leq\alpha\}$. Let $M_\alpha'$ be the modification of $M_\alpha$ with $R$ replaced by $R'=R\cup\{(\alpha,\alpha)\}.$ Now let $C_1$ be the class of reserved ordinals $M_\alpha$ and $C_2$ be the class of modified reversed ordinals $M_\alpha'$. It is not hard to see that none of the two models in these two classes are bisimilar to each other.  However, from \cite{ModelModalHML} we know that there is no \MLinf\ formula to separate $\Ci$ from $\Cii$, namely $\Ci\asymp_{\MLinf} \Cii$. Thus $\divideontimes$ does not hold.

As an analogy of Theorem~\ref{biggesthmclass} about Hennessy-Milner property, we can actually show that m-saturated classes constitute the core of the maximal collection $\mc{K}$ of model classes which has the following property: $$\forall \Ci,\Cii\in \mc{K}: \Ci\bis^E_\mc{C}\Cii\iff \Ci\asymp_\ML \Cii \quad(\circledast)$$
Let $\mc{K}^m$ be the collection of model classes which extend some m-saturated classes by possibly adding modally-equivalent m-saturated models. Given a model class $\mc{C}$, let $\mc{C}|_m$ be the class of m-saturated models in $\mc{C}$. Formally, $$\mc{K}^m=\{ \mc{C}\mid \mc{C}|_m \textrm{ is m-saturated and }\forall M\in\mc{C}\exists M'\in\mc{C}|_m:M\equiv_\ML M' \}$$
We now prove the following theorem:
\begin{thm}\label{largestcollclassthm}
$\mc{K}^m$ is the largest collection of model classes that contains $\mc{K}^m$ and satisfies $(\circledast)$.
\end{thm}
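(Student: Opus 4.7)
The statement decomposes into two subclaims: that $\mc{K}^m$ itself satisfies $(\circledast)$, and that no collection properly extending $\mc{K}^m$ does. For the first, fix $\Ci,\Cii \in \mc{K}^m$. The direction $\Ci\lftE{\bis}\Cii\Rightarrow \Ci\asymp_\ML\Cii$ is Theorem~\ref{DimpE}(2) applied to the invariance of Theorem~\ref{bisinv}. For the converse, the defining property of $\mc{K}^m$ guarantees every $M\in \Ci$ is $\ML$-equivalent to some $M'\in \Ci|_m$ (and symmetrically for $\Cii$), so any formula separating $\Ci|_m$ from $\Cii|_m$ would also separate $\Ci$ from $\Cii$; hence $\Ci\asymp_\ML\Cii$ implies $\Ci|_m\asymp_\ML\Cii|_m$. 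The restrictions are m-saturated by definition, so Theorem~\ref{cbiscoin} delivers $\Ci|_m\lftE{\bis}\Cii|_m$, which transfers directly to $\Ci\lftE{\bis}\Cii$.

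For maximality, suppose $\mc{K}\supsetneq \mc{K}^m$ satisfies the hypothesis and pick $\mc{C}\in \mc{K}\setminus \mc{K}^m$. I case-split on which defining clause of $\mc{K}^m$ fails and in each case build a companion $\Cii\in \mc{K}^m\subseteq \mc{K}$ breaking $(\circledast)$ against $\mc{C}$. \emph{Case (i):} some $M,s\in \mc{C}$ is not $\ML$-equivalent to any member of $\mc{C}|_m$ (so $M$ is not m-saturated). Set $\Cii=\{\ue(M),\pi_s\}\in \mc{K}^m$. Theorem~\ref{ueequiv} gives $\mc{C}\asymp_\ML\Cii$ directly, so $(\circledast)$ yields some $N\in \mc{C}$ with $N\bis \ue(M),\pi_s$; Lemma~\ref{m-bis} then makes $N$ m-saturated, and $N\equiv_\ML M$ places $N\in \mc{C}|_m$ as the allegedly missing equivalent, contradiction. \emph{Case (ii):} every $M\in \mc{C}$ has an $\ML$-equivalent in $\mc{C}|_m$, but $\mc{C}|_m$ fails $\ML$-compactness, witnessed by some $\Sigma$. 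Running the ultrafilter-theorem construction from the proof of Lemma~\ref{cuems} on $\mc{C}|_m$, I obtain $u^*\in U_{\mc{C}|_m}$ with $M^*:=(\ue(\biguplus\mc{C}|_m),u^*)\vDash \Sigma$, and set $\Cii=\{M^*\}\in \mc{K}^m$. The condition $u^*\in U_{\mc{C}|_m}$ forces $\mc{C}\asymp_\ML\Cii$ exactly as in Lemma~\ref{cueequiv}: $\mc{C}\vDash\varphi$ yields $V_{\biguplus\mc{C}|_m}(\varphi)\supseteq S_{\mc{C}|_m}$, hence $V_{\biguplus\mc{C}|_m}(\varphi)\in u^*$ and $M^*\vDash\varphi$; conversely $M^*\vDash\varphi$ forces $V_{\biguplus\mc{C}|_m}(\varphi)\cap S_{\mc{C}|_m}\neq\emptyset$. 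Then $(\circledast)$ supplies $N\in \mc{C}$ with $N\bis M^*$; by Lemma~\ref{m-bis} $N\in \mc{C}|_m$, and $N\equiv_\ML M^*\vDash \Sigma$ gives $N\vDash \Sigma$, contradicting the choice of $\Sigma$.

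The chief obstacle is Case (ii): the witness $M^*$ must simultaneously be m-saturated (to invoke Lemma~\ref{m-bis}), realize $\Sigma$ (to produce the final contradiction), and stand in $\asymp_\ML$ to all of $\mc{C}$, not merely to $\mc{C}|_m$. Placing $u^*$ inside $U_{\mc{C}|_m}$ rather than in an unrestricted ultrafilter space is what secures the last condition, while the finite-intersection step pushing $\Sigma$ into $u^*$ exploits precisely the compactness failure of $\mc{C}|_m$. Aligning these three requirements is the delicate bookkeeping, though each ingredient already appears in the proofs of Lemmas~\ref{cueequiv} and~\ref{cuems}.
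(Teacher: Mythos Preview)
Your argument is correct and follows essentially the same route as the paper's proof: the same case split on which clause of the definition of $\mc{K}^m$ fails, and in each case the same companion singleton $\{\ue(M),\pi_s\}$ or an ultrafilter point realizing $\Sigma$, followed by Lemma~\ref{m-bis} to land the contradiction. Your Case~(ii) builds the witness over $\biguplus\mc{C}|_m$ rather than $\biguplus\mc{C}$ as the paper does, but this is an inessential variation. One point in your favor: you explicitly verify that $\mc{K}^m$ itself satisfies $(\circledast)$, which the paper's appendix proof actually omits.
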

\begin{proof}
See Appendix~\ref{app-largestcollclassthm}.
\qed\end{proof}

\section{Conclusion and Future Work}
In this paper, we studied the expressive power as class-distinguishing power in depth. We lifted the notion of invariance to classes of models aiming to capture the expressive power of logics. On compact classes, the lifted notions of class invariance captures the expressive power precisely. In particular, we focused on the notion of \textit{class bisimulation} and demonstrated the application of our results by revisiting modal-definability with our new insights which may shorten the proof. It makes clear that compactness of classes of models plays an important role in obtaining precise structural characterizations of expressive power. If two logics satisfy the compactness property, their comparison in terms of distinguishing power conincides with their comparison in terms of expressive power. This helps to obtain characterization of compact logics in terms of Lindstr\"{o}m-type theorems. However, compactness fails on non-elementary extension of FOL (e.g.\ modal $\mu$-calculus) or fragments of FOL restricted on finite models. We may look for different classes of models with different lifting method for such case to characterize the expressive power.

{\scriptsize
 \bibliographystyle{abbrv}

\begin{thebibliography}{10}

\bibitem{mlbook}
P.~Blackburn, M.~de~Rijke, and Y.~Venema.
\newblock {\em Modal Logic}.
\newblock {Cambridge University Press}, November 2002.

\bibitem{MucalcHML}
J.~Bradfield and C.~Stirling.
\newblock Modal {$\mu$}-calculi.
\newblock In {\em Handbook of Modal Logic}, volume~3, pages 722--756. Elsevier
  Science Inc., New York, NY, USA, 2006.

\bibitem{BrowneClarkeGr87}
M.~Browne, E.~Clarke, and O.~Gr\"{u}mberg.
\newblock Characterizing {Kripke} structures in temporal logic.
\newblock In {\em TAPSOFT}, volume 249 of {\em LNCS}, pages 256--270. 1987.

\bibitem{LindTCate}
B.~T. Cate, J.~van Benthem, and J.~V{\"a}{\"a}n{\"a}nen.
\newblock Lindstr{\"o}m theorems for fragments of first-order logic.
\newblock In {\em LICS '07}, pages 280--292, Washington, DC, USA, 2007. IEEE
  Computer Society.

\bibitem{ExCTLLTL}
E.~Clarke and I.~Draghicescu.
\newblock Expressibility results for linear-time and branching-time logics.
\newblock In {\em REX workshop}, volume 354 of {\em Lecture Notes in Computer
  Science}, pages 428--437. Springer, 1989.

\bibitem{MCSF}
A.~Dawar and M.~Otto.
\newblock Modal characterisation theorems over special classes of frames.
\newblock In {\em Logic in Computer Science}, pages 21--30, Los Alamitos, CA,
  USA, 2005. IEEE Computer Society.

\bibitem{EmersonH83}
A.~E. Emerson and J.~Y. Halpern.
\newblock ``{Sometimes}'' and ``{Not Never}'' revisited: On branching versus
  linear time.
\newblock In {\em POPL}, pages 127--140, 1983.

\bibitem{ModelModalHML}
V.~Goranko and M.~Otto.
\newblock Model theory of modal logic.
\newblock In {\em Handbook of Modal Logic}, volume~3, pages 249--329. Elsevier
  Science Inc., New York, NY, USA, 2006.

\bibitem{MuEx}
D.~Janin and I.~Walukiewicz.
\newblock On the expressive completeness of the propositional mu-calculus with
  respect to monadic second order logic.
\newblock In {\em CONCUR'96}, pages 263--277, London, UK, 1996.
  Springer-Verlag.

\bibitem{NainV07}
S.~Nain and M.~Vardi.
\newblock Branching vs. linear time: Semantical perspective.
\newblock In {\em ATVA}, volume 4762 of {\em Lecture Notes in Computer
  Science}, pages 19--34. Springer, 2007.

\bibitem{LindTfragment}
M.~Otto and R.~Piro.
\newblock A {Lindstr{\"o}m} characterisation of the guarded fragment and of
  modal logic with a global modality.
\newblock In C.~Areces and R.~Goldblatt, editors, {\em Advances in Modal
  Logic}, pages 273--287. College Publications, 2008.

\bibitem{Benthem}
J.~Van~Benthem.
\newblock {\em Modal Logic and Classical Logic}.
\newblock Bibliopolis, 1983.

\bibitem{BenthemMLLind07}
J.~van Benthem.
\newblock A new modal {Lindstr\"{o}m} theorem.
\newblock {\em Logica Universalis}, 1(1):125--138, January 2007.

\bibitem{UUpuremodal}
Y.~Venema.
\newblock Modal definability, purely modal.
\newblock In J.~Gerbrandy, editor, {\em JFAK. Essays Dedicated to Johan van
  Benthem on the Occasion of his 50th Birthday}. Amsterdam University Press,
  1999.

\end{thebibliography}

}

\newpage
\appendix
\noindent{\bf\Large Appendix}

\section{Proof of Theorem~\ref{Expr=ClassDist}}
\label{app-Expr=ClassDist}
\begin{proof}
\begin{description}
 \item[($1\Rightarrow 2$)] Suppose $L_2$ is at least as class-distinguishing as $L_1$, and let
$\mc{C}$ be a class of models definable by $\varphi_1\in L_1$. Then $\mc{C}=\{M\in\mathcal{M}\mid M\vDash\varphi_1\}$ and its complement $\overline{\mc{C}}=\{M\in\mathcal{M}\mid M\not\vDash\varphi_1\}$. Since $\varphi_1$ separates $\mc{C}$ and its complement, it follows from the assumption that there exists a separating formula $\varphi_2\in L_2$ for $\mc{C}$ and its complement. But then either $\mc{C}=\{M\in\mathcal{M}\mid M\vDash\varphi_2\}$ and $\overline{\mc{C}}=\{M\in\mathcal{M}\mid M\nvDash\varphi_2\}$ or vice versa. In any case $\mc{C}$ is definable in $L_2$.

 \item[($2\Rightarrow 1$)] Suppose $L_2$ is at least as expressive as $L_1$, and let $\Ci$, $\Cii$ be classes of models in $\mathcal{M}$. Suppose there exists a separating formula $\varphi_1\in L_1$ for $\Ci$ and $\Cii$ such that $\Ci\subseteq \{M\mid M\vDash\varphi_1\}$ and $\Cii\subseteq \{M\mid M\nvDash\varphi_1\}$. Since $L_2$ is at least as expressive, there is a formula $\varphi_2$ in $L_2$ such that $$\{M\in\mathcal{M}\mid M\vDash\varphi_1\}=\{M\in\mathcal{M}\mid M\vDash\varphi_2\}.$$ But then obviously $\varphi_2$ is a separating formula for $\Ci$ and $\Cii$ in $L_2$.
 \end{description}
\qed\end{proof}
\section{Proof of Proposition~\ref{prop-diff}}
\label{app-prop-diff}
\begin{proof}
First note that for any $u\in U_\mc{C}$, if $u$ is not a principal ultrafilter generated by some $s\in S_{\mc{C}}$ then it is not a principal ultrafilter. Suppose not, then $u$ is generated by a $s'\not\in S_\mc{C}$. Since $\{s'\}\cap S_{\mc{C}}=\emptyset$, $u\not\in U_\mc{C},$ contradiction. Now we know that $U_{\mc{C}}|_{np}$ only contains non-principal ultrafilters. Since $S_\mc{C}$ is infinite, for any co-finite subset $X$ of $\biguplus\mc{C}$: $X\cap S_\mc{C}\not=\emptyset$. Thus we have: $$U_{\mc{C}}|_{np}=\{u\mid  \textrm{ $u$ contains all co-finite subsets of $\biguplus\mc{C}$ and $\forall X\in u: X\cap S_\mc{C}\not=\emptyset$ }\}$$
Suppose $u\in U'_\mc{C}.$ According to the definition, $u$ contains all cofinite subsets of $S_\mc{C}$. Now we show $u\in U_\mc{C}$ by checking whether all co-finite subsets of $\biguplus\mc{C}$ are in $u.$ Take a co-finite subset $X$ of $\biguplus\mc{C}$. It is not hard to see that $X\cap S_\mc{C}$ is co-finite in $S_\mc{C}.$ Thus $X\cap S_\mc{C}\in u$, then by closure property of ultrafilter $u$, $X\in u.$

If $\biguplus\mc{C}\backslash S_{\mc{C}}$ is infinite, then it is easy to see that $u^*\in U_\mc{C}$ where $$u^*=\{X\mid X \textrm{ is a co-finite subset of $\biguplus\mc{C}$}\}$$ however, $u^*\not\in U'_\mc{C}$ since it does not contain any co-finite subset of $S_\mc{C}.$
\qed 
\end{proof}

\section{Proof of Theorem~\ref{biggesthmclass}}
\label{app-biggesthmclass}
\begin{proof}
Suppose towards contradiction that there is a class of pointed Kripke models $\mc{C}\supset\mc{C}^m$ such that $\mc{C}$ enjoys Hennessy-Milner property. Then there is a pointed model ($M,s$ generated by $s$) in $\mc{C}$ such that $M,s$ is not m-saturated. Now we consider the ultrafilter extension of pointed model $M$: $\ue(M)$ (call the generated model $N,t$). From Theorem~\ref{uethem}, $M,s\equiv_{\ML}N,t$ and $N,t$ is m-saturated, thus $N,t\in\mc{C}$. Since $\mc{C}$ has Hennessy-Milner property, $M,s\bis N,t$. From Lemma~\ref{m-bis}, $M,s$ is m-saturated, contradiction.
\qed\end{proof}
\section{Proof of Lemma~\ref{m-bis}}
\label{app-m-bis}
\begin{proof}
Suppose $M,s\bis N,t$ and $N,t$ is m-saturated. Towards contradiction suppose that pointed model $M,s$ is not m-saturated. Then there is a point $s'$ such that there is a set of \ML\ formulas $\Sigma$ for which each finite subset is satisfiable at successors of $s'$, but $\Sigma$ itself is not satisfiable at any successor of $s'$. Let $t'$ be the bisimilar point of $s'$ in $N$. We claim $$\textrm{every finite subset of $\Sigma$ is satisfiable at the successors of $N,t'$}.$$
To prove the claim, observe that for any finite subset $\Delta$ of $\Sigma$, $\Diamond\bigwedge\Delta$ is satisfiable at $s'$. Since $M,s'\bis N,t'$ thus $\Diamond\bigwedge\Delta$ is satisfiable at $t'$, meaning that there is a successor of $t'$ such that $t'$ satisfies $\Delta$. Therefore, every finite subset of $\Sigma$ is satisfiable at the successors of $t'$.

Now based on the claim, and the fact that $N,t$ is m-saturated, we have that $\Sigma$ is satisfiable at some successor $r$ of $t'$. However, there can not be a bisimilar successor of $s'$ in $M$ to $r$ in $N$, otherwise $\Sigma$ is satisfiable in a successor of $s'$. Thus $M,s'\not \bis N,t'$, contradiction.\qed
\end{proof}
\section{Proof of Theorem~\ref{largestcollclassthm}}
\label{app-largestcollclassthm}
\begin{proof}
Suppose towards contradiction that there is a class $\mc{K}'\supseteq \mc{K}^m$ such that $(\circledast)$ holds for \ML. Then according to the definition of $\mc{K}^m$, there is a class $\mc{C}\in \mc{K}'$ such that either $\mc{C}|_m$ is not m-saturated or there is a model $M,s\in\mc{C}$ that is not equivalent to any m-saturated model in $\mc{C}$.

For the first case, there is a set of formulas $\Sigma$ such that it is not satisfiable in $\mc{C}|_m$ but each finite subset of $\Sigma$ is. Now consider the $\ue(\mc{C})$. It is easy to see that $\Sigma$ is satisfiable  in $\ue(\mc{C})$. Thus there is a m-saturated model $M,s$ in $\ue(\mc{C})$ such that $M,s\vDash\Sigma$. We now take the class $\mc{C}'=\{M,s\}$, it is clear that $\mc{C}'\in \mc{K}'$ since it is m-saturated. Then for any $\phi\in\Sigma$, $\mc{C}\nvDash\neg\phi$ since $\phi$ must be satisfiable somewhere in $\mc{C}$. Thus it is not hard to verify that $\mc{C}\asymp_\ML \mc{C}'$. Now according to our assumption that $\mc{K}'$ has property $(\circledast)$ for \ML, we know that there is a model $N,t$ in $\mc{C}$ which is bisimilar to $M,s$. However, then $N,t\vDash\Sigma$, contradiction.

For the latter case, since $M,s$ is not m-saturated and we know there is no modally equivalent m-saturated model of $M,s$ in $\mc{C}$, thus $\ue(M),\pi_s\not\in\mc{C}$. Now we consider $\mc{C}$ and $\mc{C}'=\{\ue(M),\pi_s\}$. It is easy to see that $\mc{C}\asymp_\ML\mc{C}'$. However, there is no model $N,t$ in $\mc{C}$ such that $N,t\bis\ue(M),\pi_s$, since otherwise according to Lemma~\ref{m-bis} and Theorem~\ref{ueequiv}, $N,t$ is m-saturated too and $N,t\equiv_\ML M,s.$
\qed\end{proof}
\end{document}